\documentclass[12pt]{ouparticle}


\usepackage{algorithm}
\usepackage{algorithmic}
\usepackage{amsmath}
\usepackage{amssymb}
\usepackage{amsthm}
\usepackage{bm}
\usepackage{bbm}
\usepackage{braket}
\usepackage{breqn}
\usepackage{enumitem}
\usepackage{hyperref}
\usepackage{mathtools}
\usepackage{scalerel}
\usepackage{stackengine,wasysym}
\usepackage{xcolor}

\newcommand{\ind}[1]{\mathbbm{1}_{\{#1\}}}

\DeclareMathOperator{\Var}{Var}
\DeclareMathOperator{\Ex}{\mathbb{E}}
\DeclareMathOperator{\p}{\mathbb{P}}

\newtheorem{definition}{Definition}
\newtheorem{lemma}{Lemma}
\newtheorem{theorem}{Theorem}

\begin{document}

\title{Differentially Private Finite Population  Estimation via Survey Weight Regularization}

\author{%
\name{Jeremy Seeman}
\address{Michigan Institute for Data Science and Institute for Social Research \\
University of Michigan, Ann Arbor, MI,\\
48104, USA}
\email{jhseeman@umich.edu}
\and
\name{Yajuan Si}
\address{Institute for Social Research \\
University of Michigan, Ann Arbor, MI,\\
48104, USA}
\email{yajuan@umich.edu}
\and
\name{Jerome P. Reiter}
\address{Department of Statistical Science \\
Duke University, Durham, NC \\
27708, USA
}
\email{jreiter@duke.edu}
}

\maketitle

\abstract{    
In general, it is challenging to release differentially private versions of survey-weighted statistics with low error for acceptable privacy loss. This is because weighted statistics from complex sample survey data can be more sensitive to individual survey response and weight values than unweighted statistics, resulting in 
differentially private mechanisms that can add substantial noise to the unbiased estimate of the finite population quantity.  On the other hand, simply disregarding the survey weights adds noise to a biased estimator, which also can result in an inaccurate estimate.  Thus, the problem of releasing an accurate survey-weighted estimate essentially involves 
a trade-off among bias, precision, and privacy. We leverage this trade-off to develop a 
differentially private method for estimating finite population quantities.  The key step is to privately estimate a hyperparameter that determines how much to regularize or shrink survey weights as a function of privacy loss. We illustrate the differentially private finite population estimation using the Panel Study of Income Dynamics.  We show that optimal strategies for releasing DP survey-weighted mean income estimates require orders-of-magnitude less 
noise than naively using the original survey weights without modification.
}

\keywords{confidentiality; disclosure; sampling design; and weighting}

\section{Introduction}

As part of efforts to protect data subjects' privacy and confidentiality, data stewards can release statistics that satisfy differential privacy (DP) \cite{dwork_calibrating_2006,dwork_algorithmic_2014}. 
 To date, typical applications of DP have been based on data from censuses or administrative databases.  Often, however, statistics are based on surveys with complex designs, e.g., using multi-stage, unequal probability sampling. It is well known that analysts should account for the design in inferences, which is typically done by using survey weights. Survey-weighted statistics offer unbiased and consistent estimates for finite population quantities, such as population means and totals. Yet, as we describe below, survey weights introduce challenges to implementing DP methods. These challenges motivate our work: how might data stewards apply DP to release survey-weighted statistics from complex sample surveys?

Preliminary work at the intersection of DP and survey statistics has focused on synthetic data generation with weights \cite{hu_private_2022}, estimation under classical sampling designs like stratified sampling \cite{lin_differentially_2023}, and interpretations of survey sampling methods for their privacy amplification properties \cite{bun_controlling_2022,hu_accuracy_2022}. Each of these approaches attempts to utilize as much information as possible about the sampling process. However, weighting schemes can cause practical problems for DP.  Weighted statistics can have significantly larger sensitivities than their unweighted counterparts, hence requiring substantially more noise to provide the equivalent level of DP protections at the same level of privacy loss \cite{drechsler_differential_2023, reiter:arisa}.  Of course, one could avoid the associated increase in the DP noise variance by disregarding the sampling weights in estimation. Indeed, this can be appealing when the weighted and unweighted estimates are similar, which can occur when the survey weights are uncorrelated with the particular survey variable of interest \cite{little-weighting-SM2005, bollen_are_2016, si2023population}. However, when this is not the case, the data steward ends up adding noise to a (perhaps severely) biased estimate. 

These two extremes  suggest that DP survey-weighted estimation involves a trade-off among bias, precision, and privacy.  Indeed, such trade-offs are common for DP estimation tasks even with independently identically (iid) data as well \cite{kamath_bias-variance-privacy_2023}.  To effect this trade-off, we use linear combinations of the weighted and unweighted estimates, which we obtain by regularizing the survey weights in a DP manner. We use some of the  privacy budget to determine the degree of regularization and compute the statistics of interest.  We note that weight regularization is a familiar tool in survey contexts; for example, survey organizations routinely reduce anomalously large survey weights to reduce 
the variability of survey estimates \cite{gelman_struggles_2007,beaumont_new_2008,BNFP:SI15, prior-si2018}.  

\subsection{Contributions}
We summarize our contributions here:
\begin{enumerate}
    \item In Section \ref{sec:tri}, we analyze the three-way relationship between privacy loss, precision, and bias emerging from survey data. To do this, we introduce a regularization parameter $\lambda \in [0, 1]$ that linearly shrinks the survey weights to a constant when $\lambda = 1$. For any survey sample, there exists an ``optimal" value $\lambda^*$ which minimizes DP mean-squared error (for a fixed privacy loss) that depends on the sample size, survey measure range, possible weighting adjustments, and the difference between the unweighted and weighted mean estimates. We prove that $\lambda^* > 0$ (for any informative sampling design).  Similarly, we prove that for any finite privacy loss, there is a limit to the amount of bias that can be corrected by design-based weight adjustment without requiring DP noise that exceeds said correction. 
    \item In Section \ref{sec:algs}, we propose a two-step procedure to estimate survey-weighted population means using $\rho$-zero-concentrated differential privacy 
    \cite{bun_concentrated_2016}. First, we use the exponential mechanism to estimate $\lambda^*$; then, we use this output to shrink the survey weights and estimate the population mean using the Gaussian mechanism. We also provide different asymptotic and finite-sample approaches to quantifying errors due to sampling, weight shrinkage, and DP noise, allowing users to construct DP confidence intervals for the population mean estimates. 
       
    \item In Section \ref{sec:data_analysis}, we demonstrate our methodology on survey microdata from the Panel Study of Income Dynamics (PSID) \cite{psid}, a longitudinal survey containing family-level statistics on income sources and other sociodemographic information and oversampling from lower income sub-populations. We show how different survey outcome variables require different degrees of survey weight regularization, allowing analysts to more efficiently tailor DP privacy loss budgets when estimating multiple population means for different survey response variables. We also empirically evaluate uncertainty quantification properties, including confidence interval coverage.
\end{enumerate}

\subsection{Related Literature}

While there is an extensive literature on differentially private statistical analyses (see \cite{slavkovic_statistical_2023} for a review) and on methods for complex sample survey inference, there is little literature at their intersection. Many DP algorithms rely on the ``amplification by sub-sampling" property, wherein applying a DP algorithm on a simple random sample without replacement yields smaller privacy loss than the same algorithm applied to the entire population \cite{balle2018privacy}. However, for survey designs besides simple random sampling, this property may not hold \cite{bun_controlling_2022} nor does it always improve accuracy \cite{hu_accuracy_2022}. We consider design-based survey inference where the weights themselves contain all relevant sampling information and, therefore, must be protected with DP. We do not consider the release of auxiliary data used to construct design-based weights, instead isolating the privacy cost of incorporating survey sample design exclusively within the weights. 

The most direct line of work compared to ours uses methods that generate synthetic survey responses and weights \cite{hu_private_2022}. These methods can produce synthetic data that are interoperable with existing analyses and admit combining-rules-based approaches to inferences with synthetic data \cite{raghu:reiter:2007}.  Our approach differs in key ways. First, we directly use survey-weighted estimators as opposed to working through multiple synthetic datasets and inferences via combining rules. Second, we provide decision-making guidelines for whether certain kinds of weighting corrections can be sufficiently estimated using DP at a given finite sample size. 

\section{Background}
\label{sec:background}

In this section, we describe the finite population estimation setting and review the formal privacy notion that we use.

\subsection{Finite Population Estimation}
\label{sec:fpe}

Let $\mathcal{U}$ represent a finite population comprising $N$ records.  For $i=1, \dots, N$, let $y_i$ be a survey variable of interest, which we call the response variable. The data steward takes a random sample $\mathcal{S} \subset \mathcal{U}$.  For $i=1, \dots, N$, let $I_i=1$ when record $i$ is in $\mathcal{S}$ and let  $I_i=0$ otherwise.  Let $n=\sum_{i=1}^N I_i$ be the sample size.  For $i=1, \dots, N$, let $\pi_i = \p(I_i = 1)$ be the probability that record $i$ is randomly sampled.  This is determined from the sampling design features.  We also will utilize $\pi_{ij} = \p(I_i = 1, I_j=1)$, which is the joint probability that both records $i$ and $j$ are in $\mathcal{S}$.

The design weight for each record $i =1, \dots, N$ is defined as  $w_i =  1/\pi_i$.  Design weights often are adjusted, for example, for survey nonresponse \cite{NRBA-cr2022,NRBA-long2022}. For now, we work only with design weights.  We discuss incorporating adjusted weights in Section \ref{sec:discussion}.  Thus, the data available for estimation are $\mathcal{S} = \{(y_i, w_i): I_i=1, \, i=1, \dots, N\}$. We assume that $\{(y_i, w_i): I_i=0, \, i=1, \dots, N\}$ are not available for analysis.

We presume that both the response variable and the survey weights lie within the bounded intervals $[L_Y, U_Y]$ and $[L_W, U_W]$, respectively.  For response variables, this is naturally the case for categorical variables.  Bounds also can arise from the survey instrument, which may restrict the range of answers.  When neither of these is the case, there may be reasonable bounds from domain knowledge or variable definitions.  For design weights, at minimum $L_w=1$ in probability samples where every record in $\mathcal{U}$ has a nonzero chance of selection. The $U_W$ arises from the nature of the sampling design. 
In what follows, we presume that $U_W \geq N/n$, where $N/n$ is the weight that arises from the design that takes a simple random sample of $n$ records from the $N$ records in $\mathcal{U}$. This condition is typically satisfied in survey designs that are less efficient than simple random sampling, and it can be enforced when setting $U_W$. For convenience, we define $\Delta_W \triangleq U_W - L_W$ and, without loss of generality, assume $L_Y = 0$ and $1 \leq L_W \leq U_W$. In what follows, we will use $U_W$ as a conservative upper bound on $\Delta_W$. However, replacing $U_W$ with $\Delta_W$ in our downstream results will yield identical conclusions.

Our goal is to estimate the population mean $\theta \triangleq \sum_{i=1}^N y_i/N$.  With complex sample surveys, an unbiased estimator of $\theta$ is the survey-weighted mean \cite{horvitzthompson}
\begin{equation}
    \label{eq:wmean}
    \hat{\theta}(\bm{y}, \bm{w}) \triangleq \frac{1}{N} \sum_{i\in \mathcal{S}} y_i w_i.
\end{equation}
Here, each $y_i$ and $w_i$ are treated as fixed features of individual $i$.  The randomness in $\hat{\theta}$ derives from the probability sampling design, which is fully characterized by the survey weights. By treating each $y_i$ and $w_i$ as fixed constants, we make our analysis consistent with DP approaches that treat confidential data entries as constants from a fixed ``schema" of possible values.

We also can find an unbiased estimator of the variance of $\hat{\theta}$.  We have 
\begin{equation}
    \label{eq:ht_var}
    \widehat{\Var}_{\mathrm{HT}}(\hat{\theta})
    \triangleq \frac{1}{N^2} \left( \sum_{i \in \mathcal{S}} \frac{1 - \pi_i}{\pi_i^2} y_i^2 + \sum_{i \in \mathcal{S}} \sum_{j \neq i, j \in \mathcal{S}} \left( \frac{\pi_{ij} - \pi_i \pi_j}{\pi_i \pi_j} \frac{y_i y_j}{\pi_{ij}} \right) \right).
\end{equation}
When the second term in \eqref{eq:ht_var} is negative, as is usually the case in sampling-without-replacement designs, a conservative approximation uses only the first term in \eqref{eq:ht_var}, yielding the simplified estimator 
\begin{equation}
    \label{eq:ht_var_approx}
    \widehat{\Var}_{\mathrm{ApproxHT}}(\hat{\theta}) 
    \triangleq \frac{1}{N^2} \sum_{i \in \mathcal{S}} \frac{1 - \pi_i}{\pi_i^2} y_i^2.
\end{equation}
The estimator in \eqref{eq:ht_var_approx} is also the unbiased estimator of the variance of \ref{eq:wmean} for data collected by Poisson sampling, i.e., sample each record $i$ independently with its probability $\pi_i$, which by design has  
\begin{equation}
    \label{eq:poisson_approx}
    \pi_{ij} = \p(I_i = 1, I_j = 1) = \p(I_i = 1) \p(I_j = 1) = \pi_i \pi_j.
\end{equation}
Poisson sampling is typically not implemented in practice because it makes the sample size $n$ a random variable.  Agencies like to set $n$ in advance for budgetary and workload considerations.  Nonetheless, Poisson sampling estimators can be reasonable approximations for other unequal probability designs~\cite{sarn:wret:1992}.  

\subsection{Privacy Definitions and Methods}

In applications of formal privacy, one needs to define mathematically what one means by privacy, which we now do.  This requires a notion of what it means for two collected samples to be adjacent datasets, of which there are many possible options \cite{drechsler2024complexitiesdifferentialprivacysurvey}.

\begin{definition}[Adjacency]
    We say that two observed samples of size $n$ are adjacent if and only if they differ on the contributions of one observed record, i.e., 
    \begin{equation}
        \label{eq:adj}
        \{ (y_i, w_i) \}_{i=1}^n \sim_M \{ (y_i', w_i') \}_{i=1}^n \iff \#\{ i \in [n] \mid y_i \neq y_i' \text{ or } w_i \neq w_i' \} = 1.
    \end{equation}
 \end{definition}

Our analysis assumes that survey weights are fixed properties of individual records that do not change depending on which units appear in the realized sample. This helps align our analysis with standard DP analyses that treat observed confidential data (in this case, survey responses and weights) as constants, instead of random variables. We additionally assume that the population and sample sizes, $N$ and $n$, respectively, are public information. While this assumption reflects standard practice for publishing survey metadata, there may be confidentiality concerns if membership in the population under study is privacy-concerning, especially if the sample size $n$ is quite small.

 As a next step, we define the DP distance metrics that we use to construct the private finite population estimation methods. 

 \begin{definition}[$\rho$-zero-concentrated differential privacy \cite{bun_concentrated_2016}]
    Let $M$ be a randomized algorithm that releases statistics based on $\{ (y_i, w_i) \}_{i=1}^n$. We say that algorithm $M$ satisfies $\rho$-zero-concentrated differential privacy ($\rho$-zCDP) if, for all adjacent observed samples and all $\alpha \in (1, \infty)$, 
    \begin{equation}
        \label{eq:zcdp}
        d_{\alpha}\left(M(\bm{y}, \bm{w}) \ || \ M(\bm{y}', \bm{w}') \right) \leq \rho \alpha,
    \end{equation}
    where above, $d_\alpha(\cdot \ || \ \cdot)$ is the $\alpha$-R\'enyi divergence. 
\end{definition}

Broadly, DP provides numerous semantic privacy guarantees about the ability for adversaries to distinguish between two adjacent databases under various definitions of adjacency and data generating processes (see \cite{kifer_bayesian_2022} for an example discussion). Many of these guarantees rely on independence, i.e., assuming that data comes from an independent (but not necessarily identically) distributed process. In our context, \eqref{eq:poisson_approx} ensures that these semantic privacy guarantees hold because we assume that modifying one survey weight does not affect the others, both for our privacy analysis and our data generating assumption analysis.

We next briefly introduce some mechanisms which satisfy $\rho$-zCDP. Our method combines two common base algorithms used to satisfy DP: the Gaussian mechanism \cite{bun_concentrated_2016} and the exponential mechanism \cite{mcsherry_mechanism_2007}. Let $(\bm{y}', \bm{w}')$ be an arbitrary sample that is adjacent to $(\bm{y}, \bm{w})$.

\begin{definition}[Gaussian Mechanism \cite{bun_concentrated_2016}]
    Suppose the statistic 
    $$
    T: \{ [L_Y, U_Y] \times [L_W, U_W] \}^n \mapsto \mathbb{R}
    $$ 
    has sensitivity defined by
    \begin{equation}
        \Delta(T) \triangleq \sup_{(\bm{y}, \bm{w}) \sim_M (\bm{y}', \bm{w}')} \left| T(\bm{y}, \bm{w}) - T(\bm{y}', \bm{w}') \right|.
    \end{equation}
    Then, the mechanism $M$ defined as
    \begin{equation}
        M(\bm{y}, \bm{w}) = T(\bm{y}, \bm{w}) + \varepsilon, \qquad \varepsilon \sim N\left(0, \frac{\Delta(T)^2}{2\rho} \right),
    \end{equation}
    satisfies $\rho$-zCDP. 
\end{definition}
\medskip 

The exponential mechanism, presented in \cite{mcsherry_mechanism_2007}, 
satisfies $\epsilon$-DP \cite{dwork_calibrating_2006}, and we use the conversion lemma from \cite{bun_concentrated_2016} to modify its form below.

\begin{definition}[Exponential Mechanism \cite{mcsherry_mechanism_2007}]
    Suppose the goal is to minimize a real-valued loss function $\ell$ over output space $\mathcal{Z}$,
    $$
    \ell: \mathcal{Z} \times \{ [L_Y, U_Y] \times [L_W, U_W] \}^n   \mapsto [0, \infty).
    $$
    Similarly, we define a functional analogue of the sensitivity given by
    \begin{equation}
        \Delta(\ell) \triangleq \sup_{(\bm{y}, \bm{w}) \sim_M (\bm{y}', \bm{w}'), \, z \in \mathcal{Z}}  |\ell(z; \bm{y}, \bm{w}) - \ell(z; \bm{y}', \bm{w}')|. 
    \end{equation}
    Then, releasing one sample from the distribution over $\mathcal{Z}$ with a density given by
    \begin{equation}
        f(z) \propto \exp\left(- \frac{\sqrt{2\rho} }{2 \Delta(\ell)} \ell(z; \bm{y}, \bm{w}) \right)
    \end{equation}
    satisfies $\rho$-zCDP.
\end{definition}

\section{DP Finite Population Estimation}
\label{sec:methods}

In this section, we propose the DP algorithm for releasing survey weighted point and interval estimates. We begin by characterizing the three-way trade-off in bias, precision, and privacy inherent in DP estimation with finite population quantities.

\subsection{Bias-Precision-Privacy Trade-offs}
\label{sec:tri}

As a motivation for our methods, consider applying the Gaussian mechanism to a survey-weighted estimate of the form in \eqref{eq:wmean}.  Let $\hat{\theta}(\bm{y}, \bm{w})$ be the value of \eqref{eq:wmean} computed using some particular $\mathcal{S} = (\bm{y}, \bm{w})$. In this setting, the sensitivity of the weighted estimator is given by
\begin{equation}
    \Delta(\hat{\theta}) = \sup_{(\bm{y}, \bm{w}) \sim_M (\bm{y}', \bm{w}')} \left| \hat{\theta}(\bm{y}, \bm{w}) - \hat{\theta}(\bm{y}', \bm{w}') \right| = \frac{U_W U_Y}{N}.
\end{equation}
To satisfy $\rho$-zCDP, we could release
\begin{equation}
    \hat{\theta}^{(\rho )}(\bm{y}, \bm{w})  \triangleq \hat{\theta}(\bm{y}, \bm{w}) + \varepsilon, \qquad \varepsilon \sim N\left(0, \frac{\Delta(\hat{\theta})^2}{2\rho} \right).
\end{equation}

This approach requires Gaussian noise with variance that grows with $\Delta_W^2$, which could result in impractical results when $\Delta_W$ is large.
Such issues are especially pronounced for surveys, where sample sizes are typically much smaller than those used to evaluate DP algorithms \cite{drechsler_differential_2023}. 

 As a start towards an alternative, now consider the unweighted estimate of the mean, which we write as $\hat{\theta}_0$. Note that we can obtain $\hat{\theta}_0$ from \eqref{eq:wmean} by replacing all the design weights for records in $\mathcal{S}$ with $N/n$. We can express $\hat{\theta}$ as  
\begin{equation}
    \hat{\theta} = \hat{\theta}_0 + \mathrm{Sign}(\hat{\theta} - \hat{\theta}_0) |\hat{\theta} - \hat{\theta}_0|.\label{eq:thetatotheta0}
\end{equation}
The first term in \eqref{eq:thetatotheta0} is the standard, low-sensitivity unweighted mean for which classical DP release mechanisms offer optimal utility guarantees \cite{awan2023canonical}. The second term in  \eqref{eq:thetatotheta0} comprises the product of $\mathrm{Sign}(\hat{\theta} - \hat{\theta}_0)$, which we call the weighting discrepancy sign, and $|\hat{\theta} - \hat{\theta}_0|$, which we call the 
absolute weighting discrepancy (AWD).

The AWD has high sensitivity and makes DP survey estimation difficult in practice.  
However, the actual value of the AWD can be close to zero in some scenarios, particularly when 
the survey response variable and survey weights are
nearly uncorrelated \cite{little-weighting-SM2005}. 
Thus, in cases with sufficiently low AWD, it may be preferable to use $\hat{\theta}_0$ and disregard the inflation in sensitivity from accounting for the survey weights. 

This thought experiment suggests it may be fruitful to define estimators that use linear combinations of $\hat{\theta}$ and $\hat{\theta}_0$ for DP release.  To do so, we introduce a regularization parameter $\lambda \in [0, 1]$ of the form
\begin{eqnarray}
 \nonumber   \hat{\theta}_\lambda &\triangleq& 
    \lambda \hat{\theta}_0 + (1 - \lambda) \hat{\theta} 
    = \hat{\theta}_0 + (1 - \lambda) \ \mathrm{Sign}(\hat{\theta} - \hat{\theta}_0) |\hat{\theta} - \hat{\theta}_0|.
\end{eqnarray}
Essentially, this estimator reduces reliance on the highly sensitive AWD in favor of the less insensitive unweighted mean.

We can instantiate $\hat{\theta}_{\lambda}$ by defining new weights for use in \eqref{eq:wmean}.  These new weights are created via the function $G_\lambda$, where 
\begin{equation}
    \label{eq:lambda} 
    G_\lambda(\bm{w}) \triangleq (1 - \lambda) \bm{w} + \frac{\lambda N}{n} \mathbbm{1}_{n}, \qquad \hat{\theta}_{\lambda}(\bm{y}, \bm{w}) = \hat{\theta}(\bm{y}, G_\lambda(\bm{w})).
\end{equation}
This function can be interpreted as shrinking the design survey weights towards $N/n$, whereby $\lambda = 0$ corresponds to the original weights and $\lambda = 1$ corresponds to $N/n$. For notational convenience, we will sometimes apply $G_\lambda(\cdot)$ to scalar values instead of vector values, such as $G_\lambda(U_W)$.

The sensitivity of $\hat{\theta}_{\lambda}$ is given by
\begin{align}
    \Delta(\hat{\theta}_\lambda) &= \sup_{(\bm{y}, \bm{w}) \sim_M (\bm{y}', \bm{w}')} \left| \hat{\theta}_\lambda(\bm{y}, \bm{w}) - \hat{\theta}_\lambda(\bm{y}', \bm{w}') \right| = \frac{G_\lambda(U_W) U_Y}{N}.
\end{align}
The reduced sensitivity admits the new $\rho$-zCDP estimator,
\begin{equation}
    \hat{\theta}_\lambda^{(\rho)}(\bm{y}, \bm{w})  \triangleq \hat{\theta}(\bm{y}, G_\lambda(\bm{w})) + \varepsilon, \qquad \varepsilon \sim N\left(0, \frac{\Delta(\hat{\theta}_\lambda)^2}{2\rho} \right).
\end{equation}

This reduction in sensitivity comes at a cost determined by the difference between $\hat{\theta}(\bm{y}, G_\lambda(\bm{w}))$ and $\hat{\theta}(\bm{y}, \bm{w})$. We give this quantity a name, \emph{mechanism bias}, to quantify bias induced by the DP mechanism, defined as
\begin{equation}
    D(\lambda) \triangleq \Ex_{\varepsilon}[ \hat{\theta}_\lambda^{(\rho )}(\bm{y}, \bm{w})] - \hat{\theta}(\bm{y}, \bm{w}).
\end{equation}
This quantity measures the expected difference between the DP estimator and the unbiased non-DP estimator. For our proposed regularization strategy, the mechanism bias is linear in $\lambda$, i.e.,
\begin{align}
 \nonumber     D(\lambda) &= 
    \hat{\theta}(\bm{y}, G_{\lambda}(\bm{w})) - \hat{\theta}(\bm{y}, \bm{w}) =
     \frac{1}{N} \left[ \sum_{i=1}^n \left( (1- \lambda) w_i + \frac{\lambda N}{n} - w_i \right) y_i \right] \\
    &= \lambda \left[\frac{1}{n} \sum_{i=1}^n y_i - \frac{1}{N} \sum_{i=1}^n y_i w_i \right] = \lambda (\hat{\theta}_0 - \hat{\theta}).
    \label{eq:lambda_linear_bias} 
\end{align}
Therefore, conditional on $\mathcal{S}$, the mean-squared error (MSE) introduced by DP takes the form
\begin{align}
 \nonumber     \ell(\lambda; \bm{y}, \bm{w}) &\triangleq 
    \Ex_{\varepsilon}\left[ (\hat{\theta}_{\lambda}^{(\rho)} - \hat{\theta}_{\lambda} + \hat{\theta}_{\lambda} -\hat{\theta})^2 \right] \\
  \nonumber    &= \Ex_{\varepsilon}\left[ (\hat{\theta}_{\lambda}^{(\rho)} - \hat{\theta}_{\lambda})^2 \right] + (\hat{\theta}_{\lambda} -\hat{\theta})^2 + 2 \Ex_{\varepsilon}\left[ \hat{\theta}_{\lambda}^{(\rho)} - \hat{\theta}_{\lambda} \right] \left( \hat{\theta}_{\lambda} -\hat{\theta}\right) \\
    &= \frac{\Delta(\hat{\theta}_{\lambda})^2}{2\rho} + D(\lambda)^2.\label{eq:loss_tri}
\end{align}
The result in  \eqref{eq:loss_tri} characterizes a three-way trade-off in bias, precision, and privacy for DP survey estimation. As we reduce the mechanism bias of our survey estimates, we require more additive noise to satisfy $\rho$-zCDP. Moreover, this effect becomes more extreme as $\rho$ gets smaller. 

For a fixed value of $\rho$, we can 
minimize \eqref{eq:loss_tri} as a function of $\lambda$ as shown in Lemma \ref{lem:tri}.
\begin{lemma}
    \label{lem:tri}
    Consider the loss function $\ell(\lambda; \bm{y}, \bm{w})$ in \eqref{eq:loss_tri}. As a function of $\lambda$, the $\ell(\lambda; \bm{y}, \bm{w})$ 
        is minimized by 
        \begin{equation}
            \label{eq:lambda_minn}
            \lambda^* \triangleq \min\left\{1, \frac{\frac{U_W}{\rho} \left( \frac{U_Y}{N} \right)^2 \left(U_W - \frac{N}{n} \right)}{\left[ \frac{1}{\rho} \left( \frac{U_Y}{N} \right)^2 \left(U_W - \frac{N}{n} \right)^2 + 2 (\hat{\theta}_0 - \hat{\theta})^2 \right]} \right\}.
        \end{equation}
\end{lemma}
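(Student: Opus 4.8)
The plan is to substitute the closed forms for $\Delta(\hat{\theta}_\lambda)$ and $D(\lambda)$ into the loss \eqref{eq:loss_tri}, recognize the result as a convex quadratic in $\lambda$, read off its unconstrained minimizer by differentiation, and then handle the constraint $\lambda \in [0,1]$ by projection. First I would rewrite $G_\lambda(U_W) = U_W - \lambda\left(U_W - \tfrac{N}{n}\right)$, so that $\Delta(\hat{\theta}_\lambda)^2 = \left(\tfrac{U_Y}{N}\right)^2\left(U_W - \lambda\left(U_W - \tfrac{N}{n}\right)\right)^2$, and recall from \eqref{eq:lambda_linear_bias} that $D(\lambda)^2 = \lambda^2(\hat{\theta}_0 - \hat{\theta})^2$. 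Then \eqref{eq:loss_tri} becomes
\begin{equation}
  \ell(\lambda; \bm{y}, \bm{w}) = \frac{1}{2\rho}\left(\frac{U_Y}{N}\right)^2\left(U_W - \lambda\left(U_W - \tfrac{N}{n}\right)\right)^2 + \lambda^2(\hat{\theta}_0 - \hat{\theta})^2.
\end{equation}

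Next I would expand this as a quadratic $A\lambda^2 - 2B\lambda + C$ in $\lambda$, where the leading coefficient $A = \tfrac{1}{2\rho}\left(\tfrac{U_Y}{N}\right)^2\left(U_W - \tfrac{N}{n}\right)^2 + (\hat{\theta}_0 - \hat{\theta})^2 \geq 0$, so $\ell$ is convex and any stationary point is a global minimizer over $\mathbb{R}$. Differentiating and setting $\ell'(\lambda) = 0$ gives the unconstrained minimizer $\lambda = B/A$, where the cross term supplies $B = \tfrac{1}{2\rho}\left(\tfrac{U_Y}{N}\right)^2 U_W\left(U_W - \tfrac{N}{n}\right)$. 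Clearing the common factor of $\tfrac{1}{2}$ from $B/A$ (equivalently, multiplying numerator and denominator by $2$) reproduces exactly the ratio appearing inside the $\min$ in \eqref{eq:lambda_minn}; this step is pure bookkeeping.

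The only point requiring care is the constraint $\lambda \in [0,1]$. Because $\ell$ is convex, its minimizer on $[0,1]$ is the Euclidean projection of the unconstrained minimizer $B/A$ onto $[0,1]$. The assumption $U_W \geq N/n$ from Section \ref{sec:fpe} makes $U_W - \tfrac{N}{n} \geq 0$, hence $B \geq 0$ and $A \geq 0$ (with $A > 0$ outside the trivial case where $\ell$ is constant); consequently $B/A \geq 0$, the lower constraint $\lambda \geq 0$ never binds, and the projection reduces to $\min\{1, B/A\}$, giving \eqref{eq:lambda_minn}. I do not anticipate a substantive obstacle here, as the argument is elementary convex optimization; the sole subtlety is verifying that only the upper endpoint can be active, which rests entirely on $U_W \geq N/n$. (In the degenerate noninformative case $U_W = N/n$ one has $B = 0$, so $\lambda^* = 0$, consistent with the weighted and unweighted estimators coinciding.)
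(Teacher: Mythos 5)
Your proof is correct and takes essentially the same route as the paper's: substitute the closed forms $\Delta(\hat{\theta}_\lambda)^2 = \left(\tfrac{U_Y}{N}\right)^2\left(U_W - \lambda\left(U_W - \tfrac{N}{n}\right)\right)^2$ and $D(\lambda)^2 = \lambda^2(\hat{\theta}_0-\hat{\theta})^2$ into the loss, expand it as a quadratic in $\lambda$, and locate the stationary point by differentiation. Your explicit convexity-plus-projection argument for handling the constraint $\lambda \in [0,1]$ is, if anything, slightly more careful than the paper's treatment, which derives the same unconstrained critical value and then separately verifies the equivalence conditions under which $\lambda^* \leq 1$ and $\lambda^* > 0$.
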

 Note that $\lambda^* > 0$ when $U_W > N/n$.  Thus, for scenarios where $U_W > N/n$, it is never optimal to use survey weights as-is; some degree of regularization, i.e., $\lambda^* > 0$, provides lower error. Also note that $\lambda^* < 1$ when 
        \begin{equation}
            \label{eq:lambda_nontriv_condition2}
            \left| \hat{\theta}_0 - \hat{\theta} \right| > \sqrt{ \frac{U_Y^2}{2\rho N n} \left(U_W - \frac{N}{n} \right) }.
        \end{equation}
        Equivalently, $\lambda^* < 1$ when 
        \begin{equation}
            \label{eq:lambda_nontriv_condition3}
            \rho > \frac{U_Y^2}{2(\hat{\theta}_0 - \hat{\theta} )^2 N n} \left(U_W - \frac{N}{n} \right).
        \end{equation}
Thus, if the effect of the mechanism bias is not sufficiently large, or if $\rho$ is sufficiently small relative to $U_Y^2$, then the optimal strategy to minimize $\ell(\lambda \cdot; \bm{y}, \bm{w})$ is to ignore the survey weights entirely, i.e., set $\lambda^* = 1$.

\begin{figure}[t]
    \centering
    \includegraphics[width=.99\textwidth]{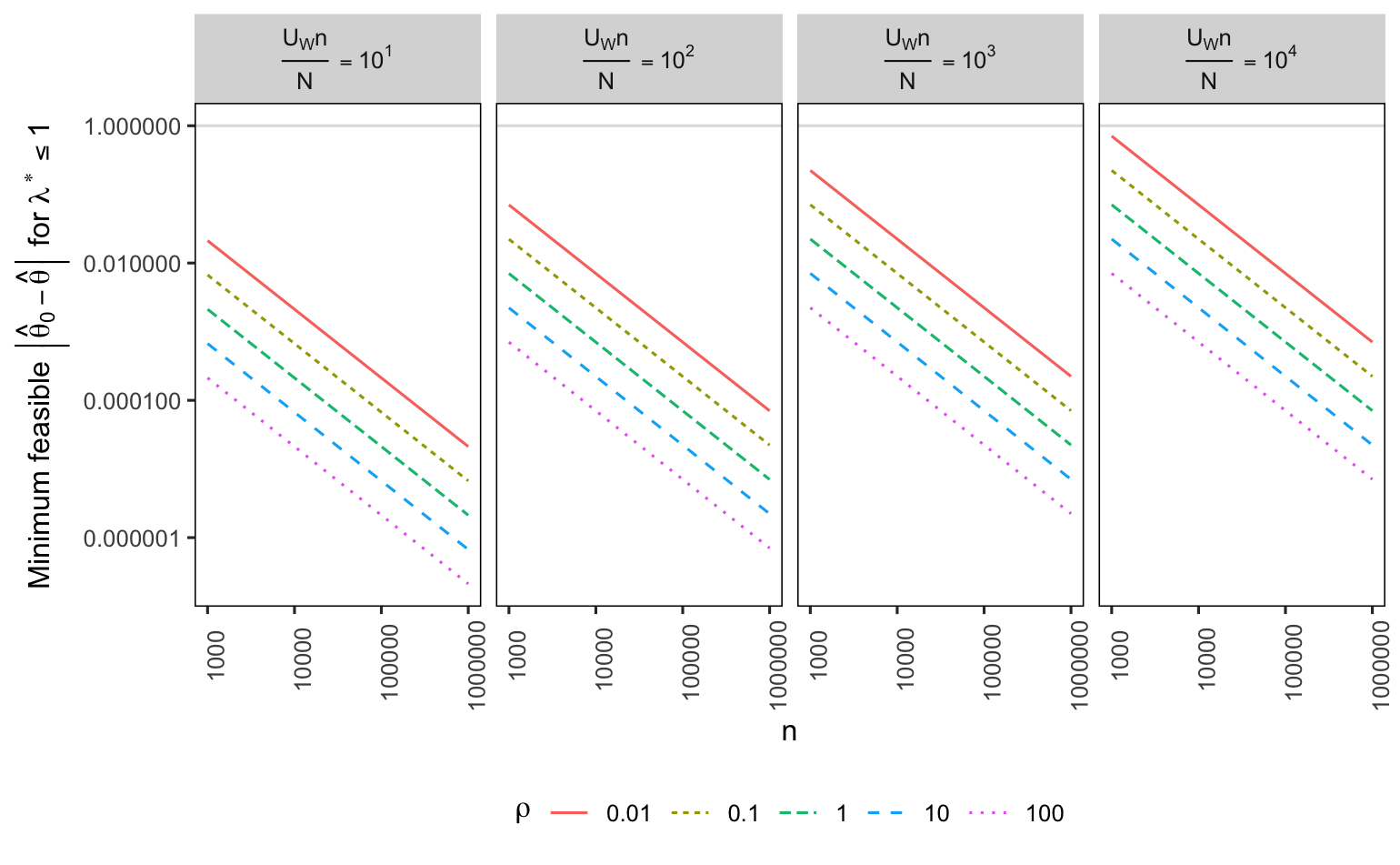}
    \caption{Minimum feasible values for $\left| \hat{\theta}_0 - \hat{\theta} \right|$ as a function of sample size $n$, weight ratio $U_W n / N$, and privacy loss $\rho$, where $\theta$ is the mean of a binary variable in a population of size $N=10^8$. }
    \label{fig:min_feas_shrinkage_example}
\end{figure}

We illustrate this trade-off using hypothetical inputs to Lemma \ref{lem:tri}. Suppose we are interested in estimating a proportion with a binary
response $y_i \in \{ 0, 1 \}$ 
from a population of size $N = 10^8$. We consider varying the sample size $n$, the privacy loss budget for estimating $\rho$, and the weight ratio $U_W/(N/n)$.
In each case, we calculate the minimum value for the AWD, $\left| \hat{\theta}_0 - \hat{\theta}\right|$, where equality is achieved in  \eqref{eq:lambda_nontriv_condition2}. This value represents the minimum difference in the population proportion with or without using survey weights necessary to consider accounting for the weighting process in a DP estimate. Figure \ref{fig:min_feas_shrinkage_example} displays the results. As expected, DP estimators for the population mean can better incorporate weighting information as the sample size increases, the weight ratio decreases, and  $\rho$ increases.  However, when these parameters move in opposite directions, it becomes harder to justify incorporating survey weights into the analysis. For example, in the extreme case where the weight ratio is $10^4$ (meaning one respondent can hypothetically contribute up to $10^4$ more to a weighted mean) and $\rho = 1$, we require at least a 10\% difference between the weighted and unweighted statistics for a sample of $n=10^3$ respondents to justify incorporating survey weights. Results like these can thus help determine the kinds of survey-weighting corrections that are feasible or infeasible to consider with DP. 

\subsection{DP Algorithm for Survey Weight Regularization}
\label{sec:algs}

The optimal degree of regularization $\lambda^*$ depends on the confidential data through the AWD. Therefore, we propose the following two-step approach to estimating $\hat{\theta}$ using $(\rho_1 + \rho_2)$-zCDP, written out in Algorithm \ref{alg:priv_reg_est}. First, we estimate $\lambda^*$ while satisfying $\rho_1$-zCDP via the exponential mechanism. We sample a value $\hat{\lambda}^{(\rho_1)}$ from the density
    \begin{equation}
        \label{eq:lambda_expmech_implement}
        f(\lambda) \propto \ind{\lambda \in [0, 1]} \exp\left(-\frac{\sqrt{2\rho_1}}{2\Delta(\ell)} \ell(\lambda; \bm{y}, \bm{w}) \right),
    \end{equation}
    where  $\Delta(\ell) = (\Delta(\hat{\theta}) - \Delta(\hat{\theta}_0))^2$ as shown in the Appendix \ref{apx:proofs}.
Second, 
we sample $\hat{\theta}_{\hat{\lambda}^{(\rho_1)}}^{(\rho_2)}$ according to the Gaussian mechanism using weights computed from \eqref{eq:lambda} with the sampled $\hat{\lambda}^{(\rho_1)}$. In this place, we leverage Lemma \ref{lem:tri} and replace its $\rho$ with $\rho_2$ in our proposed algorithm. 
In this way, we still satisfy $(\rho_1 + \rho_2)$-zCDP per Theorem \ref{th1}, proved in the Appendix, while maintaining a high probability of reducing the 
DP noise added to the estimate. 

\begin{theorem}\label{th1}
    Algorithm \ref{alg:priv_reg_est} satisfies $(\rho_1 + \rho_2)$-zCDP.
\end{theorem}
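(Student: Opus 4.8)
The plan is to decompose Algorithm \ref{alg:priv_reg_est} into its two constituent mechanisms and invoke the adaptive composition property of zCDP. First I would establish that Step 1 --- sampling $\hat{\lambda}^{(\rho_1)}$ from the density in \eqref{eq:lambda_expmech_implement} --- satisfies $\rho_1$-zCDP. This amounts to observing that \eqref{eq:lambda_expmech_implement} is exactly an instance of the exponential mechanism as defined earlier, with loss $\ell(\lambda; \bm{y}, \bm{w})$, output space $[0,1]$, and privacy parameter $\rho_1$. The only nontrivial ingredient is that $\Delta(\ell) = (\Delta(\hat{\theta}) - \Delta(\hat{\theta}_0))^2$ is a valid and \emph{data-independent} upper bound on the loss sensitivity $\sup_{(\bm{y}, \bm{w}) \sim_M (\bm{y}', \bm{w}'),\, \lambda \in [0,1]} |\ell(\lambda; \bm{y}, \bm{w}) - \ell(\lambda; \bm{y}', \bm{w}')|$; since this is the calculation deferred to Appendix \ref{apx:proofs}, here I would cite it and conclude that the exponential mechanism guarantee delivers $\rho_1$-zCDP.

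Next I would treat Step 2. The subtlety is that the noise variance $\Delta(\hat{\theta}_{\hat{\lambda}})^2/(2\rho_2)$ depends on the random output $\hat{\lambda}^{(\rho_1)}$ of Step 1, so Step 2 is not a standalone mechanism but a family of mechanisms indexed by $\lambda$. The key observation is that for \emph{each fixed} $\lambda \in [0,1]$, the map releasing $\hat{\theta}(\bm{y}, G_\lambda(\bm{w})) + \varepsilon$ with $\varepsilon \sim N(0, \Delta(\hat{\theta}_\lambda)^2/(2\rho_2))$ is precisely the Gaussian mechanism applied to the statistic $\hat{\theta}_\lambda$, whose sensitivity equals $G_\lambda(U_W) U_Y / N$, and is therefore $\rho_2$-zCDP. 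Because this guarantee holds uniformly over all fixed $\lambda$, Step 2 qualifies as a $\rho_2$-zCDP mechanism that may adaptively take the output of Step 1 as an auxiliary input.

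Finally I would apply the adaptive composition theorem for zCDP \cite{bun_concentrated_2016}: composing a $\rho_1$-zCDP mechanism with a subsequent mechanism that is $\rho_2$-zCDP for every realization of the first mechanism's output yields a $(\rho_1 + \rho_2)$-zCDP mechanism. Since Algorithm \ref{alg:priv_reg_est} releases the pair $(\hat{\lambda}^{(\rho_1)}, \hat{\theta}_{\hat{\lambda}^{(\rho_1)}}^{(\rho_2)})$ in exactly this adaptive two-stage form, the joint release satisfies $(\rho_1 + \rho_2)$-zCDP, which is the claim. I expect the main obstacle to be articulating the adaptivity cleanly: one must verify that the worst-case sensitivity bound $\Delta(\hat{\theta}_\lambda) = G_\lambda(U_W) U_Y / N$ used to calibrate the Gaussian noise holds for every adjacent pair simultaneously --- so that the per-$\lambda$ privacy guarantee does not itself leak which record changed --- and that the single adjacency relation $\sim_M$ governs both stages, so that composition legitimately applies to the joint release.
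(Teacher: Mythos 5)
Your proposal is correct and takes essentially the same route as the paper's proof: Step 1 is identified as an instance of the exponential mechanism (with the loss-sensitivity bound $\Delta(\ell) = (\Delta(\hat{\theta}) - \Delta(\hat{\theta}_0))^2$ from Appendix \ref{apx:proofs}) yielding $\rho_1$-zCDP, Step 2 is the Gaussian mechanism with data-independent sensitivity $G_\lambda(U_W) U_Y / N$ for each fixed $\lambda$ yielding $\rho_2$-zCDP, and adaptive composition for zCDP gives the total $(\rho_1+\rho_2)$. If anything, your explicit handling of the adaptivity --- that the per-$\lambda$ Gaussian guarantee holds uniformly, so Step 2 may legitimately take $\hat{\lambda}^{(\rho_1)}$ as auxiliary input --- is articulated more carefully than in the paper's own proof, which states the composition step without elaboration.
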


\begin{algorithm}[t]
\caption{DP regularized survey-weighted population estimate}
\label{alg:priv_reg_est}
\begin{algorithmic}
    \REQUIRE $\rho_1, \rho_2 \in (0, \infty)$, $\{ (y_i, w_i) \}_{i=1}^n \in \left\{[L_Y, U_Y] \times [L_W, U_W] \right\}^n$, $N \in \mathbb{N}$. 
    \STATE Sample $\hat{\lambda}^{(\rho_1)}$ from the density $f(\lambda)$ where
    \begin{equation}
        \label{eq:alg1}
        f(\lambda) \propto \ind{\lambda \in [0, 1]} \exp\left(-\frac{\sqrt{2\rho_1}}{2\Delta(\ell)} \left( \frac{\Delta(\hat{\theta}_{\lambda})^2}{2\rho_2} + D(\lambda)^2 \right) \right).
    \end{equation}
    Sample $\hat{\theta}_{\hat{\lambda}^{(\rho_1)}}^{(\rho_2)}$ where
    \begin{equation}
        \label{eq:alg2}
        \hat{\theta}_{\hat{\lambda}^{(\rho_1)}}^{(\rho_2)} \sim N\left( \hat{\theta}_{\hat{\lambda}^{(\rho_1)}}(\bm{y}, \bm{w}), \frac{1}{2\rho_2} \left( \frac{1}{N} \left[ G_{\hat{\lambda}^{(\rho_1)}}(U_W) U_Y \right] \right)^2 \right).
    \end{equation}
    \RETURN $(\hat{\lambda}^{(\rho_1)}, \hat{\theta}_{\hat{\lambda}^{(\rho_1)}}^{(\rho_2)})$.
\end{algorithmic}
\end{algorithm}

We can characterize the errors introduced by Algorithm \ref{alg:priv_reg_est} by quantifying the concentration of  $\hat{\theta}_{\hat{\lambda}^{(\rho_1)}}^{(\rho_2)}$ about $\hat{\theta}$. The contribution of the Gaussian mechanism noise error is trivial to quantify. For the contribution from the mechanism bias, we take the following strategy.  Using  $\hat{\lambda}^{(\rho_1)}$, we approximate the smallest value that $\hat{\lambda}^*$ might be while still giving $\hat{\lambda}^{(\rho_1)}$ sufficiently high probability. Since $\lambda^*$ decreases as AWD increases, finding a plausible lower bound for  $\hat{\lambda}^*$ 
allows us to find a plausible upper bound for AWD.

This logic yields the result in Theorem \ref{thm:dpreg_util}.  Let  $\sigma^{2*}(\bm{y}, \bm{w})$ be the variance  of $\hat{\lambda}^{(\rho_1)}$ obtained from $f(\lambda)$ in \eqref{eq:lambda_expmech_implement}.

\begin{theorem}
    \label{thm:dpreg_util}
    Let 
        $(\hat{\lambda}^{(\rho_1)},  \hat{\theta}_{\hat{\lambda}^{(\rho_1)}}^{(\rho_2)})$
    be the output of Algorithm \ref{alg:priv_reg_est}. Then, we have
    \begin{equation}
        \p\left( |\hat{\theta}_{\hat{\lambda}^{(\rho_1)}}^{(\rho_2)} - \hat{\theta}| \leq C^* \right) \leq 1 - \alpha
    \end{equation}
    where 
    \begin{equation}
        C^* = \frac{D^-\left(\hat{\lambda}^{(\rho_1)} + z_{\alpha/4} \sqrt{\sup_{\bm{y}, \bm{w}} \left[ \sigma^{2*}(\bm{y}, \bm{w}) \right]}\right)}{\hat{\lambda}^{(\rho_1)}} + z_{\alpha/4} \frac{G_{\hat{\lambda}^{(\rho_1)}}(U_W) U_Y}{N\sqrt{2\rho_2}} 
    \end{equation}
    \begin{equation}
    \label{eq:weight_bias_mono3} 
    D^{-}(\lambda) \triangleq \sqrt{ \frac{1}{2} \left[ \frac{\frac{U_W}{\rho_2} \left( \frac{U_Y}{N} \right)^2 \left(U_W - \frac{N}{n} \right)}{\lambda} - \frac{1}{\rho_2} \left( \frac{U_Y}{N} \right)^2 \left(U_W - \frac{N}{n} \right)^2 \right] }. 
\end{equation}
\end{theorem}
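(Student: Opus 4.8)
The plan is to split $\hat{\theta}_{\hat{\lambda}^{(\rho_1)}}^{(\rho_2)} - \hat{\theta}$ into a mechanism-bias piece and a Gaussian-noise piece, control each on a high-probability event, and assemble the two with a union bound. Conditioning on the draw $\hat{\lambda}^{(\rho_1)}$ from the first stage, the triangle inequality gives
\[
    |\hat{\theta}_{\hat{\lambda}^{(\rho_1)}}^{(\rho_2)} - \hat{\theta}| \le \underbrace{|\hat{\theta}_{\hat{\lambda}^{(\rho_1)}}^{(\rho_2)} - \hat{\theta}_{\hat{\lambda}^{(\rho_1)}}|}_{\text{Gaussian noise}} + \underbrace{|\hat{\theta}_{\hat{\lambda}^{(\rho_1)}} - \hat{\theta}|}_{\text{mechanism bias}}.
\]
The first term is exactly the perturbation in \eqref{eq:alg2}, a centered normal with standard deviation $G_{\hat{\lambda}^{(\rho_1)}}(U_W)U_Y/(N\sqrt{2\rho_2})$; a one-sided Gaussian quantile at level $\alpha/4$ bounds it by $z_{\alpha/4}$ times that standard deviation, which is precisely the second summand of $C^*$.

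For the bias term, I would use the linearity established in \eqref{eq:lambda_linear_bias}, namely $|\hat{\theta}_{\hat{\lambda}^{(\rho_1)}} - \hat{\theta}| = |D(\hat{\lambda}^{(\rho_1)})| = \hat{\lambda}^{(\rho_1)}\,|\hat{\theta}_0 - \hat{\theta}|$, so the task reduces to bounding the AWD $|\hat{\theta}_0 - \hat{\theta}|$ from the released $\hat{\lambda}^{(\rho_1)}$ alone. The pivotal algebraic step is to invert the closed form of the optimizer in Lemma \ref{lem:tri}: solving \eqref{eq:lambda_minn} (with $\rho$ replaced by $\rho_2$, on the interior branch where $\lambda^* < 1$) for $(\hat{\theta}_0 - \hat{\theta})^2$ recovers exactly $|\hat{\theta}_0-\hat{\theta}| = D^{-}(\lambda^*)$, with $D^{-}$ as in \eqref{eq:weight_bias_mono3}. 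Since $D^{-}$ is strictly decreasing in its argument, a high-probability lower bound on $\lambda^*$ converts monotonically into a high-probability upper bound on the AWD.

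To produce that bound on $\lambda^*$, I would treat $\hat{\lambda}^{(\rho_1)}$ as concentrating about the minimizer $\lambda^*$ of $\ell$ with spread governed by the variance $\sigma^{2*}(\bm{y},\bm{w})$ of the density \eqref{eq:lambda_expmech_implement}. Using a Gaussian-type quantile, $\hat{\lambda}^{(\rho_1)}$ deviates from $\lambda^*$ by at most $z_{\alpha/4}\sqrt{\sigma^{2*}}$ with probability at least $1-\alpha/2$; because $\sigma^{2*}$ itself depends on the confidential data, I would replace it by the data-free worst case $\sup_{\bm{y},\bm{w}}\sigma^{2*}(\bm{y},\bm{w})$, which is exactly the quantity appearing inside $D^{-}$ in $C^*$. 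Choosing the deviation direction that makes $D^{-}$ conservative, and combining with the $\hat{\lambda}^{(\rho_1)}$ multiplier (using $\hat{\lambda}^{(\rho_1)}\in[0,1]$ where needed), yields the first summand of $C^*$. A Bonferroni union bound over the noise event and the $\lambda$-concentration event, each allocated level $\alpha/4$, then assembles the overall $1-\alpha$ guarantee.

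The hard part will be making the concentration step for $\hat{\lambda}^{(\rho_1)}$ rigorous. The exponential-mechanism density in \eqref{eq:lambda_expmech_implement} is only log-quadratic where the unconstrained minimizer of $\ell$ is interior; the truncation to $[0,1]$ and the $\min\{1,\cdot\}$ kink mean $\hat{\lambda}^{(\rho_1)}$ is not exactly normal, so the quantile $z_{\alpha/4}\sqrt{\sigma^{2*}}$ must be justified through a log-concavity or sub-Gaussianity argument for $f(\lambda)$, or else interpreted as an asymptotic approximation. A secondary difficulty is verifying that $\sup_{\bm{y},\bm{w}}\sigma^{2*}$ is finite and computable from the public quantities $(N,n,U_Y,U_W,\rho_1,\rho_2)$, and confirming that passing this worst-case spread through the nonlinear decreasing map $D^{-}$ preserves the correct (conservative) direction of the AWD bound; this monotonicity bookkeeping, together with reconciling the precise inequality direction in the stated guarantee, is where I expect the main care to be needed.
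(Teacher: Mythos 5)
Your proposal takes essentially the same route as the paper's proof: the same decomposition of $|\hat{\theta}_{\hat{\lambda}^{(\rho_1)}}^{(\rho_2)} - \hat{\theta}|$ into Gaussian noise plus mechanism bias, the same (truncated-)normal concentration of $\hat{\lambda}^{(\rho_1)}$ with the data-free worst case $\sup_{\bm{y},\bm{w}} \sigma^{2*}(\bm{y},\bm{w})$, the same inversion of the optimizer in Lemma \ref{lem:tri} to get $|\hat{\theta}_0 - \hat{\theta}| = D^-(\lambda^*)$ together with monotonicity of $D^-$, and a union bound to combine the two events. The rigor gaps you flag (quantiles of the truncated normal, the $\min\{1,\cdot\}$ kink shifting the density's center away from $\lambda^*$) are genuine but are equally glossed over in the paper's own argument, so your sketch is faithful to it.
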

\begin{proof} 

To begin, we note that 
$\hat{\lambda}^{(\rho_1)}$  follows a truncated normal distribution on $[0, 1]$ with the variance parameter
\begin{equation}
    \sigma^{2*}(\bm{y}, \bm{w}) \triangleq \left( \frac{\sqrt{2\rho_1}}{\Delta(\ell)} 
    \left[ \frac{1}{2\rho_2} \left(\frac{\Delta_Y}{N} \right)^2 \left(\frac{N}{n} - U_W \right)^2 + (\hat{\theta}_0 - \hat{\theta})^2\right] \right)^{-1}. 
\end{equation}
Independent of any confidential data, we have
\begin{equation}
    \sup_{\bm{y}, \bm{w}} \left[ \sigma^{2*}(\bm{y}, \bm{w}) \right] \triangleq \left( \frac{\sqrt{2\rho_1}}{\Delta(\ell)} 
    \left[ \frac{1}{2\rho_2} \left(\frac{\Delta_Y}{N} \right)^2 \left(\frac{N}{n} - U_W \right)^2 \right] \right)^{-1}.
\end{equation}

Thus, using these properties of $f(\lambda)$, we have (over the randomness in $\hat{\lambda}^{(\rho_1)}$), 
\begin{equation}
    \p\left(\lambda^* \geq \hat{\lambda}^{(\rho_1)} + z_{\alpha/2} \sqrt{\sup_{\bm{y}, \bm{w}} \left[ \sigma^{2*}(\bm{y}, \bm{w}) \right]} \right) \geq 1 - \alpha.\label{pforlambda}
\end{equation}
When $\lambda^* \in (0, 1)$, we can rearrange \eqref{eq:lambda_minn} to show that 
\begin{equation}
    \label{eq:weight_bias_mono2} |\hat{\theta}_0 - \hat{\theta}| = \sqrt{ \frac{1}{2} \left[ \frac{\frac{U_W}{\rho_2} \left( \frac{U_Y}{N} \right)^2 \left(U_W - \frac{N}{n} \right)}{\lambda^*} - \frac{1}{\rho_2} \left( \frac{U_Y}{N} \right)^2 \left(U_W - \frac{N}{n} \right)^2 \right] } \triangleq D^-(\lambda^*).
\end{equation} 
Using \eqref{eq:weight_bias_mono2}, we can rewrite \eqref{pforlambda} as  
\begin{equation}
    \p\left(|\hat{\theta}_{0} - \hat{\theta}| \leq D^-\left(\hat{\lambda}^{(\rho_1)} + z_{\alpha/2} \sqrt{\sup_{\bm{y}, \bm{w}} \left[ \sigma^{2*}(\bm{y}, \bm{w}) \right]}\right) \right) \geq 1 - \alpha,
\end{equation}
and therefore,
\begin{equation}
    \p\left(|\hat{\theta}_{\hat{\lambda}^{(\rho_1)}} - \hat{\theta}| \leq \frac{D^-\left(\hat{\lambda}^{(\rho_1)} + z_{\alpha/2} \sqrt{\sup_{\bm{y}, \bm{w}} \left[ \sigma^{2*}(\bm{y}, \bm{w}) \right]}\right)}{\hat{\lambda}^{(\rho_1)}} \right) \geq 1 - \alpha.
\end{equation}
\end{proof}

We make a few comments about Theorem \ref{thm:dpreg_util}. First, we used a union bound to establish our concentration around the two parameters. Depending on prior beliefs about the relative magnitude of errors due to mechanism bias or Gaussian noise addition, one could place more weight on either component in the concentration inequality. Second,  we can use the estimated regularization parameter as a noisy ``plug-in`` proxy for estimating AWD with $\rho_1$-zCDP (with the caveat that this plug-in estimator is high-sensitivity, like our original estimand). To do this, suppose $\hat{\theta}_0 > \hat{\theta}$; then
\begin{equation}
    \label{eq:plugin_bias}
    |\hat{\theta}_0-\hat{\theta}| \approx D^-(\hat{\lambda}^{(\rho_1)}) \implies \Ex\left[ \hat{\theta}_{\hat{\lambda}^{(\rho_1)}}^{(\rho_2)} + \hat{\lambda}^{(\rho_1)} D^-(\hat{\lambda}^{(\rho_1)}) \right] \approx \hat{\theta}.
\end{equation}
Finally, Theorem \ref{thm:dpreg_util} does not directly consider the weighting bias sign, $\mathrm{Sign}(\hat{\theta} - \hat{\theta}_0)$, as only the AWD appears in the proof. If this sign is unknown {\em a priori}, one can modestly estimate it using an instantiation of the exponential mechanism. However, in practice, it may be reasonable to treat the sign as public information. For example, our analysis of the PSID considers data where it is known {\em a priori} that the survey oversamples families from lower incomes, so that unweighted estimates of mean income are almost certainly biased low. 

\subsection{Accounting for Sampling Error}

Under the classical, mild asymptotic conditions in \cite{berger1998rate}, as sample size $n$ increases relative to population size $N$, we have $\hat{\theta} \to_D N\left(\theta, \Var_{\mathrm{HT}}(\hat{\theta}) \right)$. This supports asymptotically consistent confidence intervals using consistent  
estimators for the mean and variance. To ease the computation of sensitivity for DP, as discussed in Section \ref{sec:fpe}, absent privacy concerns we use $\widehat{\Var}_{\mathrm{ApproxHT}}(\hat{\theta})$ from \eqref{eq:ht_var} to approximate $\Var_{\mathrm{HT}}(\hat{\theta})$. 

However, we apply zCDP to  $\widehat{\Var}_{\mathrm{ApproxHT}}(\hat{\theta})$, as it depends on the confidential data.
We do so by spending an additional $\rho_3$ of privacy loss, resulting in the variance estimate $\hat{V}^{(\rho_3)}$ which we use to construct end-to-end DP 
intervals. 

Algorithm \ref{alg:priv_reg_ci} summarizes the procedure. We do not use the regularized weights in the variance. This would necessarily
underestimate the sampling variance, since contributions to the sample variance estimate increase as $w_i^2$ increases. As a result, we provide a conservative estimate of the sampling variance.

Additionally, we introduce a new parameter $\alpha_v \in (0, 1)$ to calculate a $(1 - \alpha_v)\%$ upper bound on the confidential $\widehat{\Var}_{\mathrm{ApproxHT}}(\hat{\theta})$. This parameter is used in Algorithm \ref{alg:priv_reg_ci} to conservatively estimate the sampling variance in \eqref{eq:dpci}, account for DP noise in $\hat{V}^{(\rho_3)}$ when constructing the intervals. Note that for small values of $\rho_3$, there is a small non-zero probability that this quantity will drop below 0. We truncate $\hat{V}^{(\rho_3)}$ from below at 0 to prevent the square root in  \eqref{eq:dpci} from being negative, while still accounting for DP noise in the variance estimate.

\begin{algorithm}[t]
\caption{DP regularized survey-weighted population confidence interval}
\label{alg:priv_reg_ci}
\begin{algorithmic}
    \REQUIRE $\rho_1, \rho_2, \rho_3 \in (0, \infty)$, $\{ (y_i, w_i) \}_{i=1}^n \in \left\{[L_Y, U_Y] \times [L_W, U_W] \right\}^n$, $N \in \mathbb{N}$, $\alpha \in (0, 1)$, $\alpha_v \in (0, 1)$. 
    \STATE Sample $\hat{\lambda}^{(\rho_1)}$ and $\hat{\theta}_{\hat{\lambda}^{(\rho_1)}}^{(\rho_2)}$ according to Algorithm \ref{alg:priv_reg_est}. \\
    \STATE Sample
    \begin{equation}
        \label{eq:alg3}
        \hat{V}^{(\rho_3)} \sim N\left(  \widehat{\Var}_{\mathrm{ApproxHT}}(\hat{\theta}), \ \frac{\Delta(\hat{V})^2}{2\rho_3} \right), \qquad \Delta(\hat{V}) = \Delta(\hat{\theta})^2.
    \end{equation}
    \RETURN 
    \begin{equation}
        \label{eq:dpci}
        \hat{\theta}_{\hat{\lambda}^{(\rho_1)}}^{(\rho_2)} \pm z_{\alpha/2} \sqrt{\frac{\Delta(\hat{\theta}_{\hat{\lambda}^{(\rho_1)}})^2}{2\rho_2} + \max\{ \hat{V}^{(\rho_3)}, 0 \} + z_{\alpha_v/2} \sqrt{\frac{\Delta(\hat{V})^2}{2\rho_3}}}.
    \end{equation}
\end{algorithmic}
\end{algorithm}

\begin{theorem}
    Let $\hat{\theta}^*$ be an arbitrary new population mean estimate, drawn from the same sampling scheme. Under the regularity conditions in \cite{berger1998rate} as $N,n \to \infty$ 
    \begin{equation}
        \p\left(\hat{\theta}^* \in \left[\hat{\theta}_{\hat{\lambda}^{(\rho_1)}}^{(\rho_2)} \pm z_{\alpha/2} \sqrt{\frac{\Delta(\hat{\lambda}^{(\rho_1)})^2}{2\rho_2} + \hat{V}^{(\rho_3)}  + z_{\alpha_v/2} \sqrt{\frac{\Delta(\hat{V})^2}{2\rho_3}}} \right] \right) \to_P 1 - \alpha.
    \end{equation}
\end{theorem}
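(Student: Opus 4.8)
The plan is to show that the centering statistic $\hat{\theta}_{\hat{\lambda}^{(\rho_1)}}^{(\rho_2)}$ differs from the Horvitz--Thompson estimator $\hat{\theta}$ only by terms that are asymptotically negligible on the scale of the sampling standard error, and that the reported variance is consistent for $\Var_{\mathrm{HT}}(\hat{\theta})$. The interval then collapses to the classical design-based interval, whose validity follows from the sampling CLT of \cite{berger1998rate} together with Slutsky's theorem. Concretely, I would first write the exact decomposition
\[
\hat{\theta}_{\hat{\lambda}^{(\rho_1)}}^{(\rho_2)} - \theta = (\hat{\theta} - \theta) + D(\hat{\lambda}^{(\rho_1)}) + \varepsilon_2,
\]
where $D(\hat{\lambda}^{(\rho_1)}) = \hat{\lambda}^{(\rho_1)}(\hat{\theta}_0 - \hat{\theta})$ is the mechanism bias from \eqref{eq:lambda_linear_bias} and $\varepsilon_2 \sim N(0, \Delta(\hat{\theta}_{\hat{\lambda}^{(\rho_1)}})^2/(2\rho_2))$ is the Gaussian noise (here $\Delta(\hat{\lambda}^{(\rho_1)})$ abbreviates $\Delta(\hat{\theta}_{\hat{\lambda}^{(\rho_1)}})$). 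Under the conditions of \cite{berger1998rate} the leading term satisfies $(\hat{\theta}-\theta)/\sqrt{\Var_{\mathrm{HT}}(\hat{\theta})} \to_D N(0,1)$ with $\Var_{\mathrm{HT}}(\hat{\theta})$ of order $n^{-1}$.

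The heart of the argument is the rate bookkeeping showing the two privacy-induced terms vanish once divided by $\sqrt{\Var_{\mathrm{HT}}(\hat{\theta})}$. For the noise, $\Var(\varepsilon_2) = (G_{\hat{\lambda}^{(\rho_1)}}(U_W)\,U_Y/N)^2/(2\rho_2)$; since $G_\lambda(U_W)\le\max(U_W,N/n)$ and $U_W$ is of order $N/n$, this is of order $n^{-2}$, so $\varepsilon_2/\sqrt{\Var_{\mathrm{HT}}(\hat{\theta})} \to_P 0$. For the bias, I would invoke Lemma \ref{lem:tri} and the truncated-normal law of $\hat{\lambda}^{(\rho_1)}$ established in the proof of Theorem \ref{thm:dpreg_util}: condition \eqref{eq:lambda_nontriv_condition3} forces $\lambda^* \to 0$ (at rate $n^{-2}$, since its numerator carries the factor $(U_Y/N)^2$), and the draw variance $\sigma^{2*}(\bm{y},\bm{w})$ is likewise of order $n^{-2}$, so $\hat{\lambda}^{(\rho_1)} \to_P 0$ and $D(\hat{\lambda}^{(\rho_1)})/\sqrt{\Var_{\mathrm{HT}}(\hat{\theta})} \to_P 0$. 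Combining, $(\hat{\theta}_{\hat{\lambda}^{(\rho_1)}}^{(\rho_2)} - \theta)/\sqrt{\Var_{\mathrm{HT}}(\hat{\theta})} \to_D N(0,1)$.

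Next I would show the width term $V_{\mathrm{total}} \triangleq \Delta(\hat{\theta}_{\hat{\lambda}^{(\rho_1)}})^2/(2\rho_2) + \hat{V}^{(\rho_3)} + z_{\alpha_v/2}\sqrt{\Delta(\hat{V})^2/(2\rho_3)}$ is consistent for $\Var_{\mathrm{HT}}(\hat{\theta})$: the first and third summands are of order $n^{-2}$ and drop out, while $\hat{V}^{(\rho_3)} = \widehat{\Var}_{\mathrm{ApproxHT}}(\hat{\theta}) + \varepsilon_3$ with $\varepsilon_3 = O_P(n^{-2})$ and $\widehat{\Var}_{\mathrm{ApproxHT}}(\hat{\theta}) \to_P \Var_{\mathrm{HT}}(\hat{\theta})$ under the same regularity conditions. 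Because $\hat{V}^{(\rho_3)} \to_P \Var_{\mathrm{HT}}(\hat{\theta}) > 0$, the truncation $\max\{\hat{V}^{(\rho_3)},0\}$ in \eqref{eq:dpci} is inactive with probability tending to one, so the algorithm's interval and the interval in the statement agree asymptotically. Slutsky's theorem then gives $(\hat{\theta}_{\hat{\lambda}^{(\rho_1)}}^{(\rho_2)} - \theta)/\sqrt{V_{\mathrm{total}}} \to_D N(0,1)$, and since $\hat{\theta}^* \to_P \theta$ the coverage probability converges to $1-\alpha$.

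The main obstacle is twofold. The delicate technical part is the joint rate control: I must confirm, using the explicit form of $\lambda^*$ in \eqref{eq:lambda_minn} and of $\sigma^{2*}$, that both $\hat{\lambda}^{(\rho_1)}$ and the Gaussian perturbation shrink strictly faster than the sampling standard error $n^{-1/2}$, which is what lets the privacy layer disappear in the limit rather than distort the limiting variance. The conceptual subtlety is the role of $\hat{\theta}^*$: the argument above matches the limiting variance of the pivot to the quantity $\hat{V}^{(\rho_3)}$ estimates precisely because the operative target concentrates at $\theta$; were $\hat{\theta}^*$ treated as a genuinely independent replication contributing its own sampling variance, the pivot would converge to $N(0,2)$ and an additional variance inflation would be needed, so this point should be stated carefully to keep the stated variance and the limiting distribution aligned.
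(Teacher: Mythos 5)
Your proposal takes essentially the same route as the paper's own proof, which is only a three-line sketch: assert that $\hat{\lambda}^{(\rho_1)} \to_P 0$ (so the mechanism bias vanishes), assert that the sensitivities of $\hat{\theta}_{\hat{\lambda}^{(\rho_1)}}^{(\rho_2)}$ and $\hat{V}^{(\rho_3)}$ vanish, and conclude that the interval in \eqref{eq:dpci} collapses to the classical non-DP normal-approximation interval. Your rate bookkeeping --- $\lambda^* = O(n^{-2})$ from \eqref{eq:lambda_minn}, $\sigma^{2*} = O(n^{-2})$, Gaussian noise variance $O(n^{-2})$, all compared against sampling variance of order $n^{-1}$, plus the observation that the truncation $\max\{\hat{V}^{(\rho_3)},0\}$ is asymptotically inactive --- supplies exactly the details the paper omits, and is correct under the implicit side conditions you should state explicitly (that $U_W \asymp N/n$ with the ratio $U_W n/N$ bounded, and that $U_Y$ and $\rho_1,\rho_2,\rho_3$ stay fixed as $n,N \to \infty$). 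Your treatment of the bias term handles only the case where $|\hat{\theta}_0 - \hat{\theta}|$ stays bounded away from zero, but the conclusion $D(\hat{\lambda}^{(\rho_1)}) = O_P(n^{-1})$ in fact holds in all regimes (when the AWD shrinks, $\sigma^{2*}$ grows but the product $\hat{\lambda}^{(\rho_1)}|\hat{\theta}_0 - \hat{\theta}|$ is still controlled, e.g.\ by an AM--GM bound on the two terms in the denominator of $\sigma^{2*}$), so this is a gap in exposition rather than in substance.

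The one step that does not hold as written is your final inference ``since $\hat{\theta}^* \to_P \theta$ the coverage probability converges to $1-\alpha$.'' Consistency is not enough: $\hat{\theta}^*$ fluctuates around $\theta$ at exactly the $n^{-1/2}$ scale of the interval half-width, so you cannot substitute $\theta$ for $\hat{\theta}^*$ inside the coverage probability. If $\hat{\theta}^*$ is literally an independent replication of the Horvitz--Thompson estimator, the coverage converges to $2\Phi(z_{\alpha/2}/\sqrt{2}) - 1 < 1-\alpha$, precisely the $N(0,2)$ phenomenon your closing paragraph identifies. You should recognize, however, that this is a defect of the theorem statement itself rather than of your argument relative to the paper's: the paper's proof silently establishes coverage of the fixed parameter $\theta$ (the ``classical interval'' covers $\theta$, not a fresh draw) and never reconciles this with the $\hat{\theta}^*$ appearing in the statement. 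So your proof matches the paper's argument, and the subtlety you flag is a real gap in the paper --- the correct fix is either to restate the theorem as coverage of $\theta$, or to inflate the variance in \eqref{eq:dpci} to account for the additional sampling variance of the replication.
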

\begin{proof}
    As $N,n \to \infty$ under the conditions in \cite{berger1998rate}, the mechanism bias converges to 0, i.e.,
    \begin{equation}
        \hat{\lambda}^{(\rho_1)} \to_P 0 \qquad \implies \qquad \hat{\theta}_{\lambda_{\rho_1}} \to_P \hat{\theta}.
    \end{equation}
    Similarly, the sensitivities of $\hat{\theta}_{\hat{\lambda}^{(\rho_1)}}^{(\rho_2)}$ and $\hat{V}^{(\rho_3)}$ decrease to 0 as $N,n \to \infty$ under the conditions in \cite{berger1998rate}. Combining all the above, we have that the interval in  \eqref{eq:dpci} converges to the classical non-DP normal approximation interval, yielding the result.
\end{proof}

Note that the asymptotic consistency of the DP confidence interval does not require adjustments for uncertainty in $\hat{\lambda}^{(\rho_1)}$. Instead, we rely on the estimation of $\hat{\lambda}^{(\rho_1)}$ to determine the degree to which AWD might affect our inferences. 

Finally, for completeness, we state the privacy guarantee of Algorithm \ref{alg:priv_reg_ci} in \eqref{th:DP2}.
\begin{theorem}
    Algorithm \ref{alg:priv_reg_ci} satisfies $(\rho_1 + \rho_2 + \rho_3)$-zCDP.\label{th:DP2}
\end{theorem}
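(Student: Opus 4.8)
The plan is to decompose Algorithm \ref{alg:priv_reg_ci} into three private releases and then invoke the additive composition property of zCDP together with its invariance under post-processing. Observe that the first two releases --- sampling $\hat{\lambda}^{(\rho_1)}$ via the exponential mechanism and then $\hat{\theta}_{\hat{\lambda}^{(\rho_1)}}^{(\rho_2)}$ via the Gaussian mechanism --- are exactly Algorithm \ref{alg:priv_reg_est}, which by Theorem \ref{th1} already satisfies $(\rho_1 + \rho_2)$-zCDP. Thus the only genuinely new component is the third release, $\hat{V}^{(\rho_3)}$, obtained by adding Gaussian noise to $\widehat{\Var}_{\mathrm{ApproxHT}}(\hat{\theta})$.

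First I would establish that this third release satisfies $\rho_3$-zCDP on its own, which reduces to verifying the sensitivity claim $\Delta(\hat{V}) = \Delta(\hat{\theta})^2$ used in \eqref{eq:alg3}. Writing $w_i = 1/\pi_i$, the summand of $\widehat{\Var}_{\mathrm{ApproxHT}}(\hat{\theta})$ in \eqref{eq:ht_var_approx} becomes $\frac{1 - \pi_i}{\pi_i^2} y_i^2 = (w_i^2 - w_i) y_i^2$, so changing one record under the adjacency in \eqref{eq:adj} perturbs a single term. Since $(w^2 - w) y^2$ is increasing in $y^2$ on $[0, U_Y]$ and in $w$ on $[L_W, U_W]$ (as $w \geq 1$), and attains its minimum value $0$ at $y = 0$, the worst-case change after dividing by $N^2$ is bounded by $(U_W^2 - U_W) U_Y^2 / N^2 \leq (U_W U_Y / N)^2 = \Delta(\hat{\theta})^2$, which is precisely the conservative sensitivity quoted. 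The Gaussian mechanism with variance $\Delta(\hat{V})^2 / (2\rho_3)$ therefore satisfies $\rho_3$-zCDP.

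Next I would combine the three releases via sequential (adaptive) composition of zCDP, which states that the joint release of mechanisms satisfying $\rho_1$-, $\rho_2$-, and $\rho_3$-zCDP satisfies $(\rho_1 + \rho_2 + \rho_3)$-zCDP \cite{bun_concentrated_2016}. Here $\hat{V}^{(\rho_3)}$ depends only on the confidential data and not on the earlier outputs, so non-adaptive composition already suffices for that step; but adaptive composition covers the dependence of $\hat{\theta}_{\hat{\lambda}^{(\rho_1)}}^{(\rho_2)}$ on $\hat{\lambda}^{(\rho_1)}$ and so is the cleanest single route. Finally, the returned interval in \eqref{eq:dpci} is a deterministic function of the triple $(\hat{\lambda}^{(\rho_1)}, \hat{\theta}_{\hat{\lambda}^{(\rho_1)}}^{(\rho_2)}, \hat{V}^{(\rho_3)})$ --- including the truncation $\max\{ \hat{V}^{(\rho_3)}, 0 \}$ --- so by the post-processing invariance of zCDP the interval inherits the $(\rho_1 + \rho_2 + \rho_3)$-zCDP guarantee, completing the proof.

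I expect the main obstacle to be the sensitivity verification for $\hat{V}^{(\rho_3)}$: one must confirm that computing the variance on the \emph{unregularized} weights (as the algorithm deliberately does, to avoid underestimating the sampling variance) still yields the stated bound $\Delta(\hat{V}) = \Delta(\hat{\theta})^2$ rather than something larger, and check that this is a genuine worst-case supremum over the data domain $\{[L_Y, U_Y] \times [L_W, U_W]\}^n$ rather than merely the value at a particular sample. Everything else follows mechanically from the composition and post-processing theorems for zCDP.
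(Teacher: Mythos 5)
Your proof is correct and takes essentially the same route as the paper's: decompose Algorithm \ref{alg:priv_reg_ci} into its three base mechanisms (exponential mechanism for $\hat{\lambda}^{(\rho_1)}$, Gaussian mechanism for $\hat{\theta}_{\hat{\lambda}^{(\rho_1)}}^{(\rho_2)}$, Gaussian mechanism for $\hat{V}^{(\rho_3)}$), certify each at its privacy level, and conclude by adaptive composition of zCDP, with post-processing covering the returned interval. You additionally verify the sensitivity claim $\Delta(\hat{V}) = \Delta(\hat{\theta})^2$ --- showing the single-record change in $\widehat{\Var}_{\mathrm{ApproxHT}}(\hat{\theta})$ is at most $(U_W^2 - U_W)U_Y^2/N^2 \leq (U_W U_Y/N)^2$ --- a step the paper simply asserts in the statement of \eqref{eq:alg3}, so your write-up is, if anything, more complete.
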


\section{Empirical Analyses}
\label{sec:data_analysis}


To demonstrate the methodology, we analyze family-level data from the PSID published in the year 2019. 
The PSID facilitates research on income, wealth, employment, health, family and child development, and other sociodemographic and economic topics, with an oversample of lower-income families.
We use data comprising $n=9420$ families sampled from a population of $N \approx 1.29 \times 10^8$ families. 
For the purposes of this evaluation, we treat the provided survey weights as design-based (in reality they are adjusted for nonresponse and coverage errors; see \cite{psid} for details). Under this weighting scheme, we are setting $U_W = 6 \times 10^4$ as a conservative upper bound on our survey weights. We seek to estimate population means of the variables in Table \ref{tab:psid_vars} using $\rho$-zCDP. We also add one simulated variable, \texttt{bern}, which contains iid Bernoulli draws to simulate a random survey response that is theoretically independent of the survey weights.

\begin{table}[t]
    \centering
    \begin{tabular}{c|c|c|c}
        \hline
        Variable & Description & $U_Y$ & $\hat{\theta}_0 - \hat{\theta}$ \\
        \hline 
        \texttt{inc3} & Cube-root-transformed family income & 150 & -.67 \\
        \hline 
        \texttt{pov} & 1 if family income below poverty line, else 0 & 1 & .022 \\
        \hline 
        \texttt{nf} & Number of family members & 20 & .27 \\
        \hline
        \texttt{bern} & iid Bernoulli(.5) random draws & 1 & .004 \\
        \hline 
    \end{tabular}
    \caption{Selected PSID variables and the simulated Bernoulli variable.}
    \label{tab:psid_vars}
\end{table}

 Figure \ref{fig:psid_public_values} displays the relationship between survey weights and the transformed income variable, \texttt{inc3}. The 
survey weights are weakly correlated with family income, with a Spearman's rank correlation of approximately .14. Similarly, Figure \ref{fig:psid_pov_weights} displays the distribution of survey weights for families below and above the poverty line. Here, we see a visible difference in distributions. Differences in $\hat{\theta}_0-\hat{\theta}$ are shown in Table \ref{tab:psid_vars}.  By ignoring survey weights, we would underestimate the national average family income and overestimate the national poverty rate.

\begin{figure}[t]
    \centering
    \includegraphics[width=.8\textwidth]{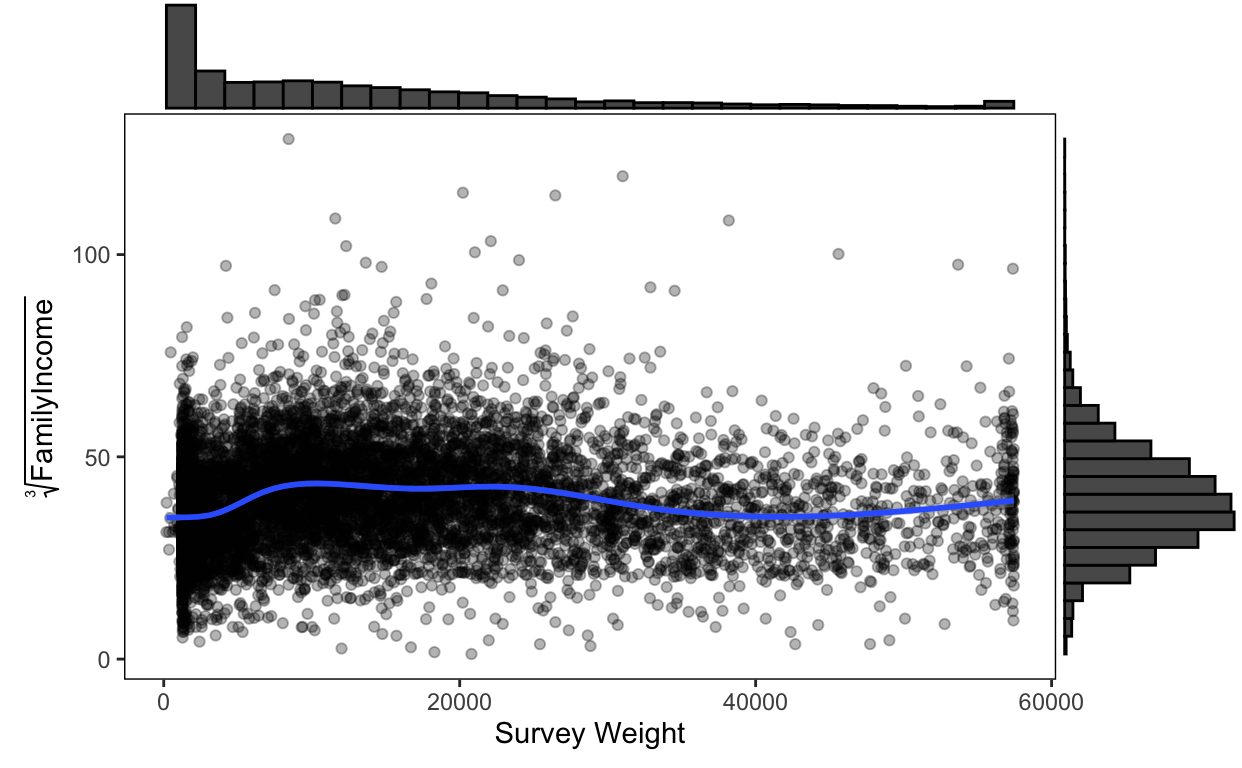}
    \caption{Plot of survey weights (x-axis) and $\texttt{inc3}$ (y-axis), with univariate histograms on the margins and a spline estimate of the central tendency in blue.}
    \label{fig:psid_public_values}
\end{figure}

\begin{figure}[h]
    \centering
    \includegraphics[width=.8\textwidth]{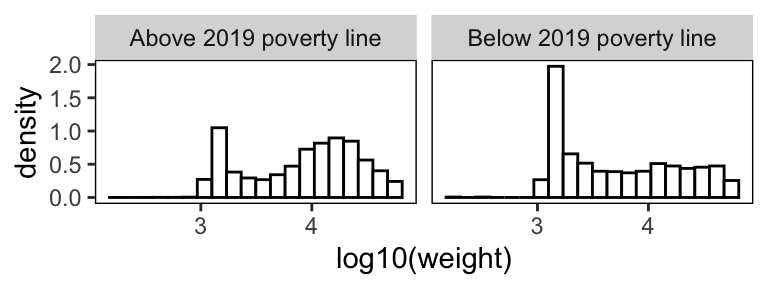}
    \caption{Histograms of logarithm transformed survey weights for respondents above and below the 2019 poverty lines (left and right, respectively).}
    \label{fig:psid_pov_weights}
\end{figure}

\subsection{Analysis of Privacy-Precision-Bias Trade-Offs}

In our first set of simulations, we use the results from Lemma \ref{lem:tri} to show how spending more privacy loss enables more fine-grained AWD corrections. Figure \ref{fig:psid_feas} is similar to Figure \ref{fig:min_feas_shrinkage_example}  but plots realized values from the two income-related PSID response variables. For demonstration purposes, we 
vary the sample size $n$ (represented by the different colored lines). The $x$-axis represents the privacy loss budget spent on estimating the population mean, and the $y$-axis refers to the smallest possible weighting bias for which $\lambda^* < 1$, i.e., for which we still benefit from accounting for survey weighting in DP inference. As expected, smaller AWD values can be accommodated with larger sample sizes and privacy loss budgets. The horizontal dashed lines refer to the realized AWDs for the two variables of interest (in a pure DP analyses, these would be confidential). Values above these lines refer to realized biases that would admit informative bias corrections at the allowed privacy loss budget level \emph{a priori}. 

\begin{figure}
    \centering
    \includegraphics[width=.99\textwidth]{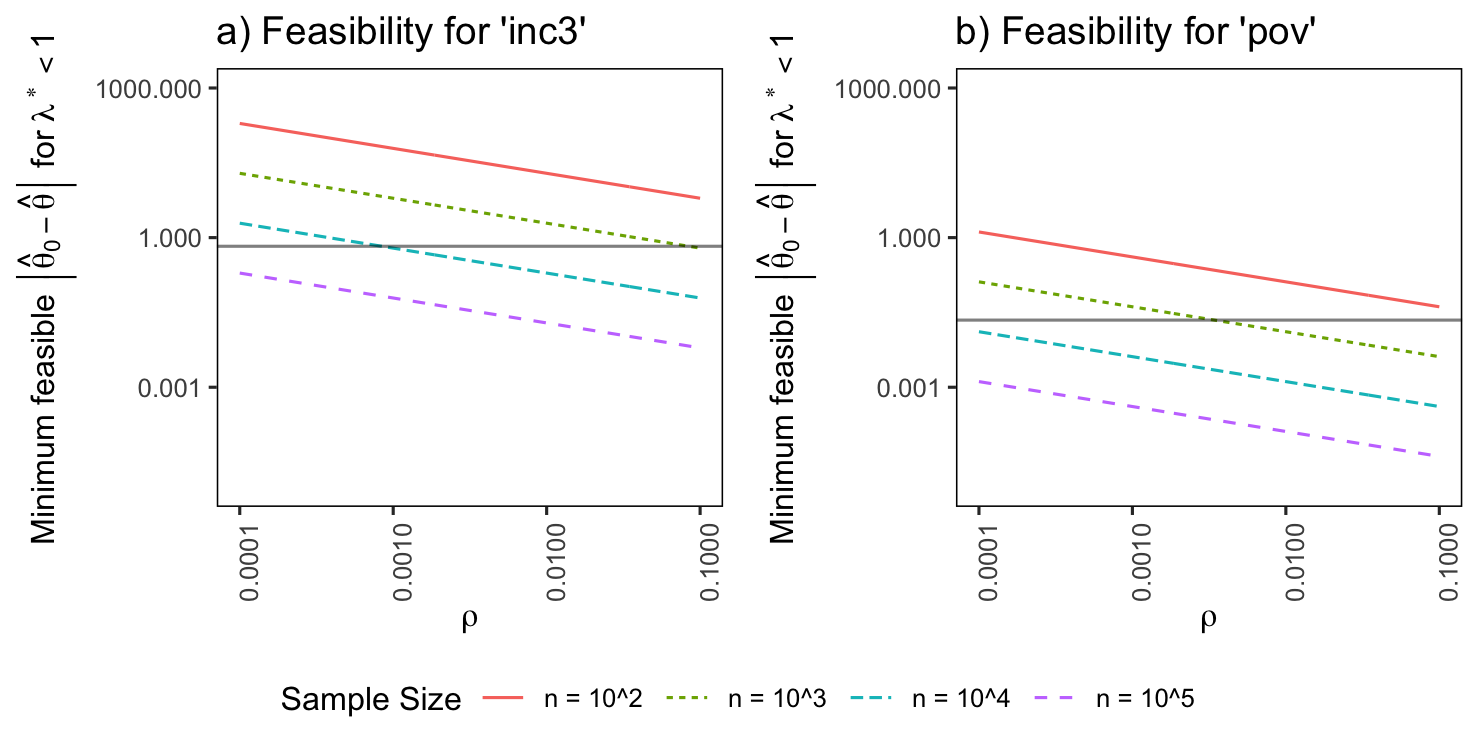}
    \caption{Theoretical minimum AWD for which $\lambda^* < 1$ (y-axis) i.e., survey weighting design is not ignorable under $\rho$-zCDP, as a function of sample size $n$ (colored lines) and privacy loss budget $\rho$. Subplots and horizontal dashed lines refer to realized AWDs for two variables: \texttt{inc3} (left) and \texttt{pov} (right).}
    \label{fig:psid_feas}
\end{figure}

\begin{figure}
    \centering
    \includegraphics[width=.99\textwidth]{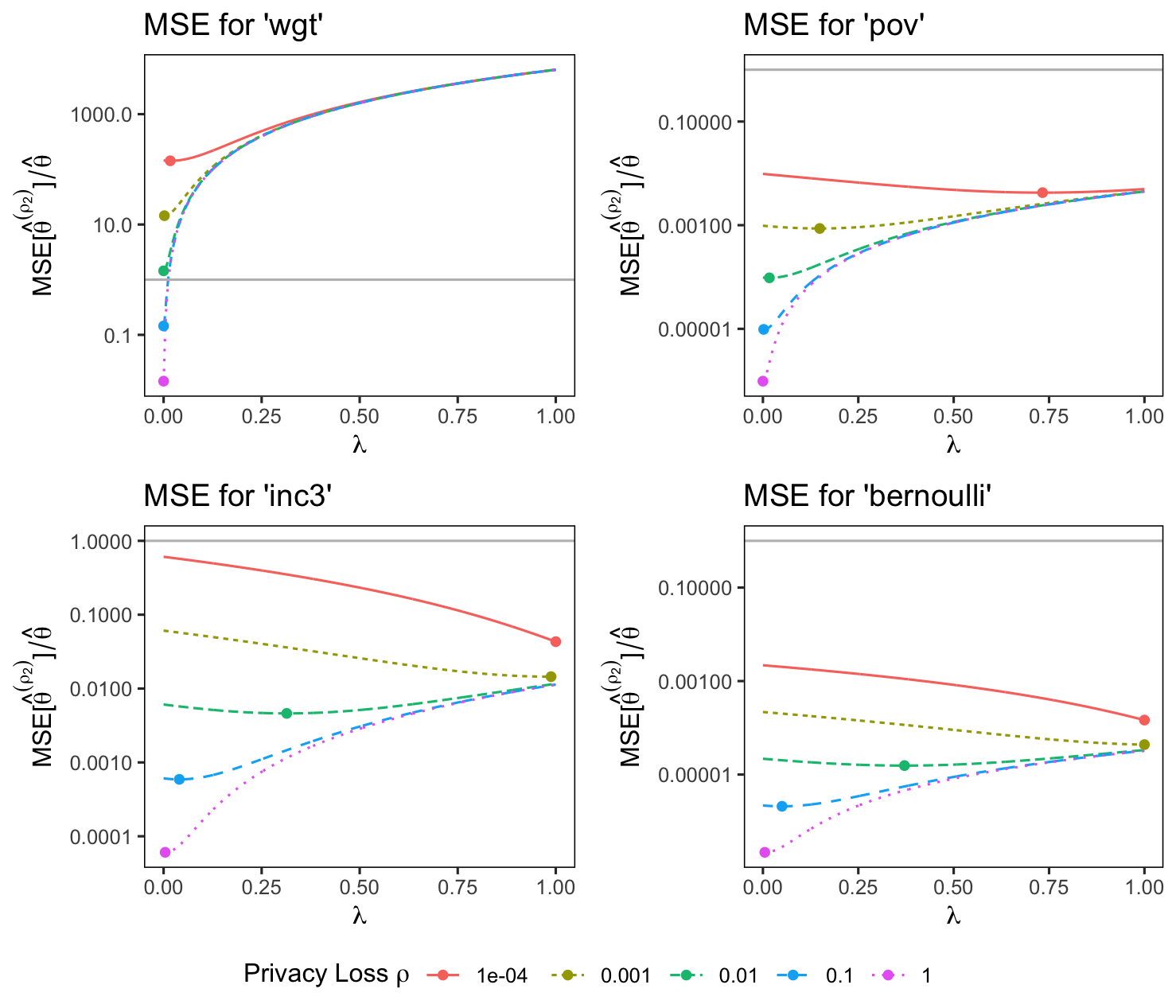}
    \caption{Realized noise-to-signal (DP mean square error divided by non-DP mean estimate, y-axis) as a function of $\lambda$ (x-axis) for different values of privacy loss budget $\rho_2$ (colored lines). Subplots are ordered with decreasing correlation between response variable and survey weights. Points refer to theoretical minimum values, which depend on confidential data and do not satisfy DP.}
    \label{fig:psid_mse_by_var}
\end{figure}

Figure \ref{fig:psid_mse_by_var} shows how the theoretical bias-precision-privacy trade-off manifests for estimating the survey-weighted average cube-root-transformed income (\texttt{inc3}) and proportion of families below the 2019 poverty line (\texttt{pov}). For comparison purposes, the figure also includes the simulated Bernoulli variable 
(\texttt{bern}) and a copy of the survey weights themselves (\texttt{wgt}), representing minimal and maximal correlation between survey weights and responses. We plot the noise-to-signal ratio as the theoretical MSE over the weighted mean estimate on the $y$-axis, with the regularization parameter $\lambda$ on the $x$-axis. We see that as the magnitude of the bias decreases (moving from top left subfigure to bottom right subfigure), the optimal MSE is achieved at larger values of $\lambda^*$ for the same privacy loss budget $\rho_2$. Moreover, as $\rho_2$ decreases, $\lambda^*$ increases for each response variable under consideration. For reasonably small choices of $\rho_2$, we tend to reject small $\lambda$ to optimize the bias-precision trade-off at each fixed $\rho_2$ value.

\begin{figure}
    \centering
    \includegraphics[width=.8\textwidth]{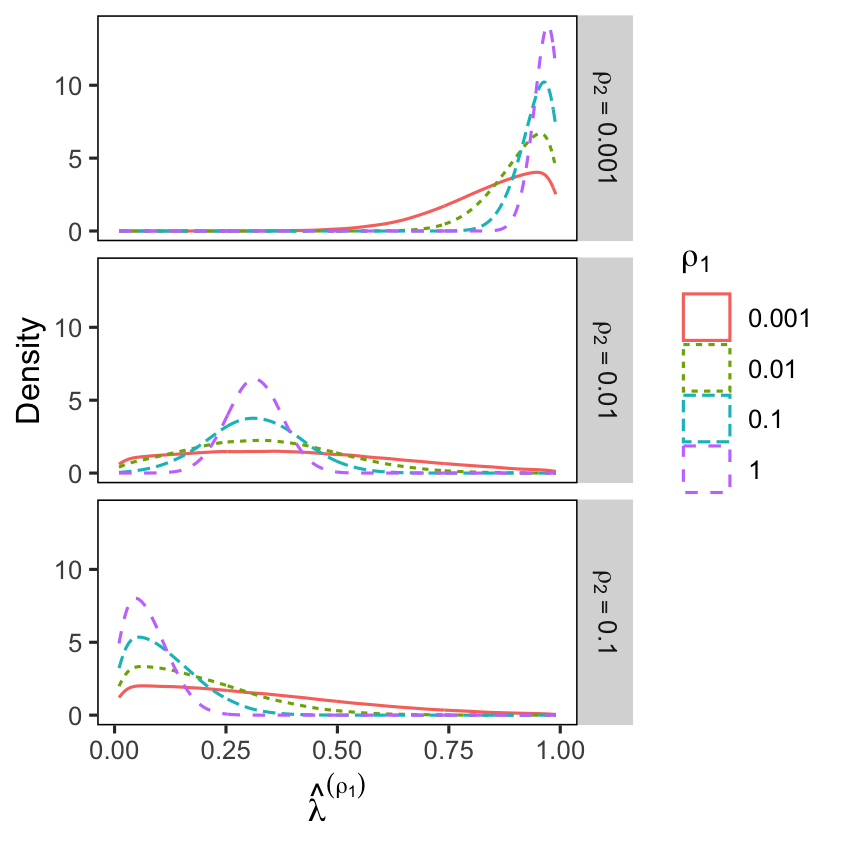}
    \caption{Kernel density estimates for the distribution of $\hat{\lambda}^{(\rho_1)}$ for different values of $\rho_1$ (colored density plot lines) and $\rho_2$ (subfigures) for \texttt{inc3}. Black dashed vertical lines refer to the confidential $\lambda^*$ for each subplot.}
    \label{fig:psid_inc3_lambda_sampling}
\end{figure}

Figure \ref{fig:psid_inc3_lambda_sampling} displays the sampling distribution of $\hat{\lambda}^{(\rho_1)}$ for estimating the average cube-root-transformed income at different values of $\rho_1$ and $\rho_2$. 
This statistic is the most sensitive, as there is a gap between the weighted and unweighted estimates. Therefore, we do not sample $\lambda$ particularly close to the optimal $\hat{\lambda}$ without a large $\rho_1$. However, even for small values of $\rho_1$, we generally avoid sampling small values of $\lambda$ with high probability, which allows us to avoid the worst of the sensitivity inflation in the next stage. 

\begin{figure}
    \centering
    \includegraphics[width=.8\textwidth]{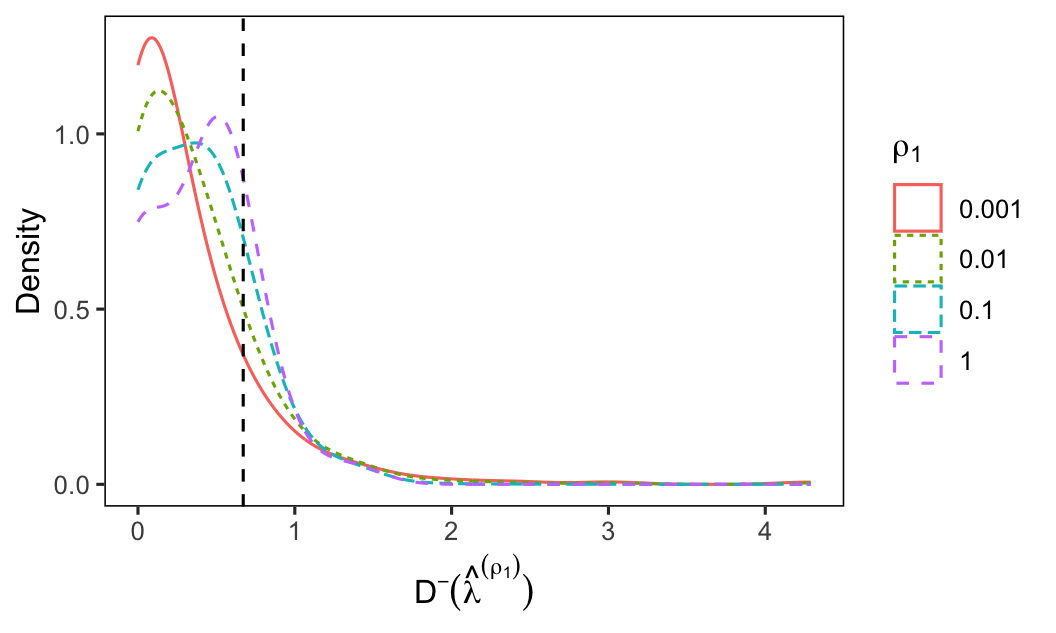}
    \caption{Kernel density estimate for the distribution of the plug-in estimates for the plug-in bias estimate $D^{-}(\hat{\lambda}^{(\rho_1)})$ for different values of $\rho_1$ (colored density plot lines) for \texttt{inc3}. Black dashed vertical line refers to the bias values based on the confidential data.}
    \label{fig:psid_inc3_lambda_bias_recov}
\end{figure}

Figure \ref{fig:psid_inc3_lambda_bias_recov} displays the sampling distribution of the plug-in bias estimate $D^{-}(\hat{\lambda}^{(\rho_1)})$ at different levels of $\rho_1$. As expected, when $\rho_1$ is small, we are more likely to sample larger $\hat{\lambda}^{(\rho_1)}$ values which implies we underestimate the magnitude of the bias. As expected, however, the plug-in estimate improves as $\rho_1$ increases.

\subsection{End-to-end DP Inferences}

In this section, we simulate DP confidence intervals using Algorithm \ref{alg:priv_reg_est} and Algorithm \ref{alg:priv_reg_ci} for the survey weighted population mean of \texttt{inc3}, assessing their width and coverage properties. We consider $\rho_1, \rho_2, \rho_3 \in [10^{-3}, 10^{-1}]$, which covers the full spectrum of regularization from $\lambda^*$, as shown in Figure \ref{fig:psid_mse_by_var}. We also vary $\alpha_v$, i.e., the $(1 - \alpha_v)$*100\% interval width upper bound, to show trade-offs between coverage and interval width.

\begin{figure}
    \centering
    \includegraphics[width=.9\textwidth]{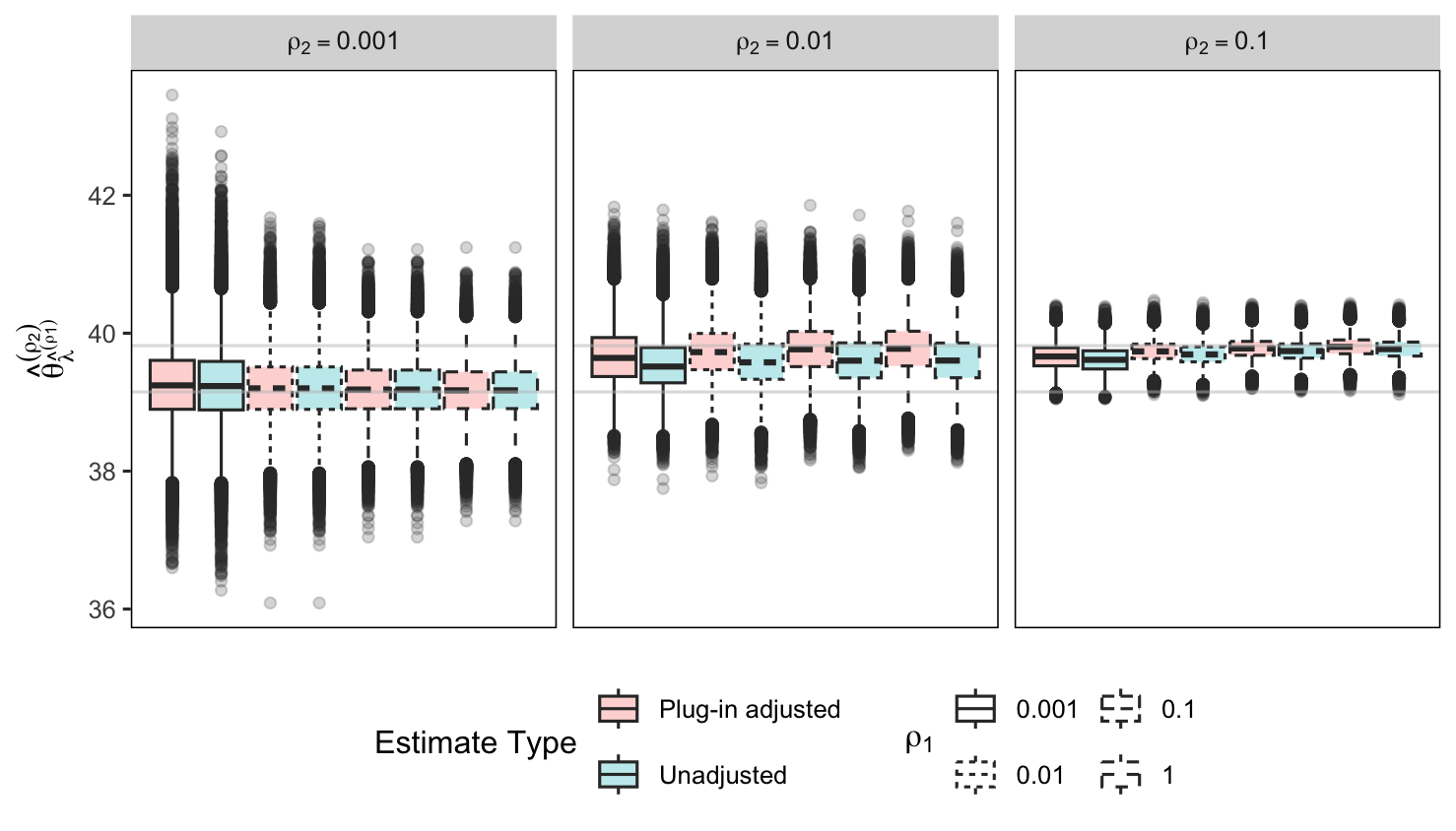}
    \caption{Boxplots of empirical simulations from Algorithm \ref{alg:priv_reg_est} for \texttt{inc3} at different values of $\rho_1$ (color) and $\rho_2$ (subfigures). Boxplot outlines refer to the raw estimate (dashed) and the plug-in bias-adjusted estimates (solid). Green dashed line refers to the confidential weighted mean estimate, and the red dashed line refers to the confidential unweighted mean estimate. }
    \label{fig:e2e_dp_samps}
\end{figure}

Figure \ref{fig:e2e_dp_samps} displays boxplots of samples for $\hat{\theta}_{\hat{\lambda}^{(\rho_1)}}^{(\rho_2)}$ at different values of $\rho_1$ and $\rho_2$. The red and green dashed lines refer to the unweighted and weighted non-DP estimates, respectively. As $\rho_2$ increases (subplots), we are able to accommodate greater survey weighting corrections relative to the additive noise magnitude. Moreover, for smaller values of $\rho_2$, spending more on $\rho_1$ (colored boxplots) reduces the overall variability of the point estimate.

\begin{figure}
    \centering
    \includegraphics[width=.99\textwidth]{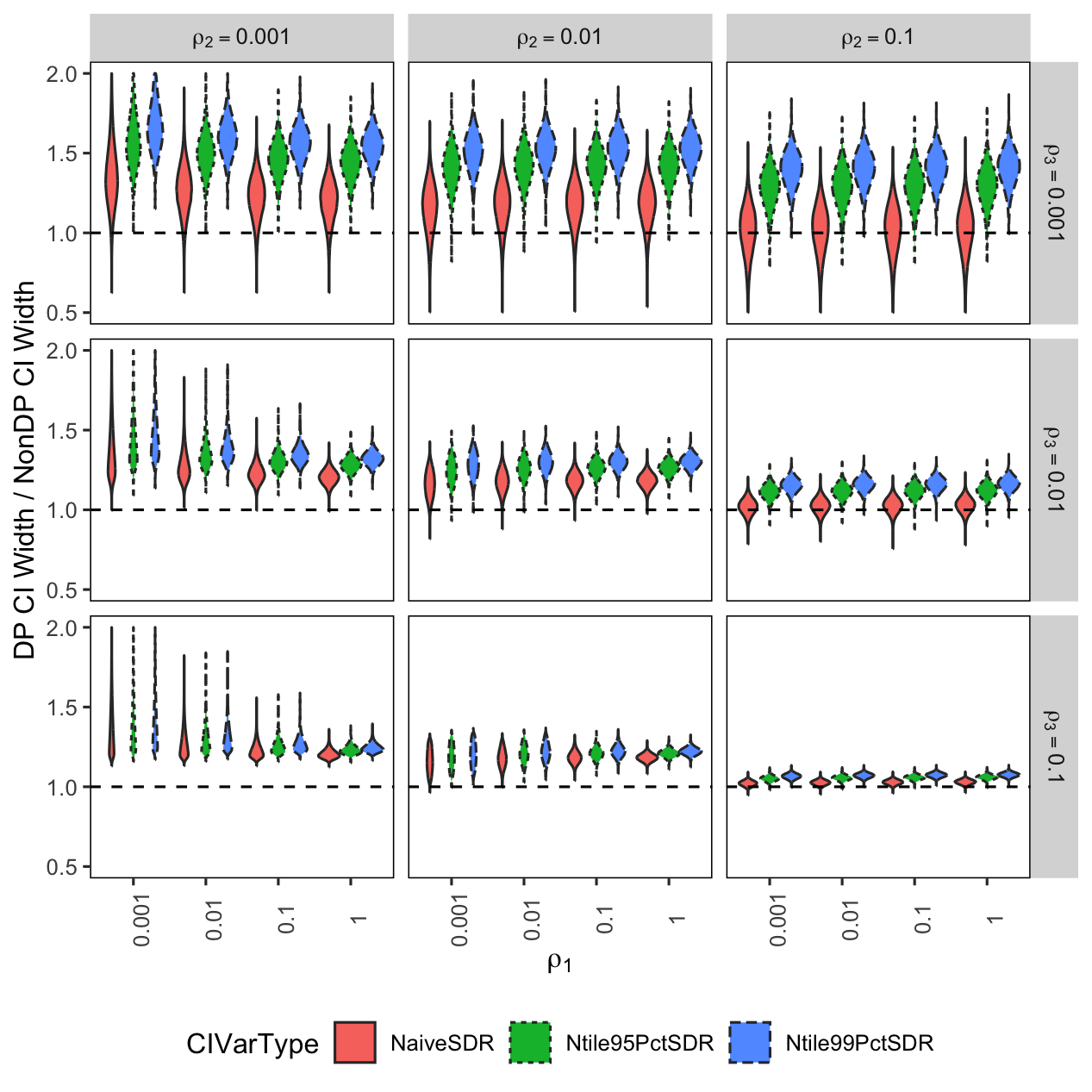}
    \caption{Ratio of DP to non-DP confidence interval widths (y-axis) by values of $\rho_1$ (x-axis), $\rho_2$ (subplot columns), $\rho_3$ (subplot rows), and $\alpha_v$ (colors). Dashed line corresponds to equality (1:1 ratio). }
    \label{fig:ci_widths}
\end{figure}

Figure \ref{fig:ci_widths} compares the interval widths of our proposed algorithm to their non-DP counterparts. For each violin plot, we show the distribution of the ratio for the DP confidence interval over the non-DP confidence interval. The dashed horizontal line at 1.0 corresponds to equality. As expected, increasing either $\rho_1$, $\rho_2$, or $\rho_3$ decreases the DP confidence interval width relative to the non-DP interval width. Of particular interest is different values for $\alpha_v$, represented by the different violin plot colors (.5, .05, and .01, respectively). As expected, decreasing $\alpha_v$ gives us wider confidence intervals by accounting for more potential uncertainty in the interval width. 

\begin{figure}
    \centering
    \includegraphics[width=.99\textwidth]{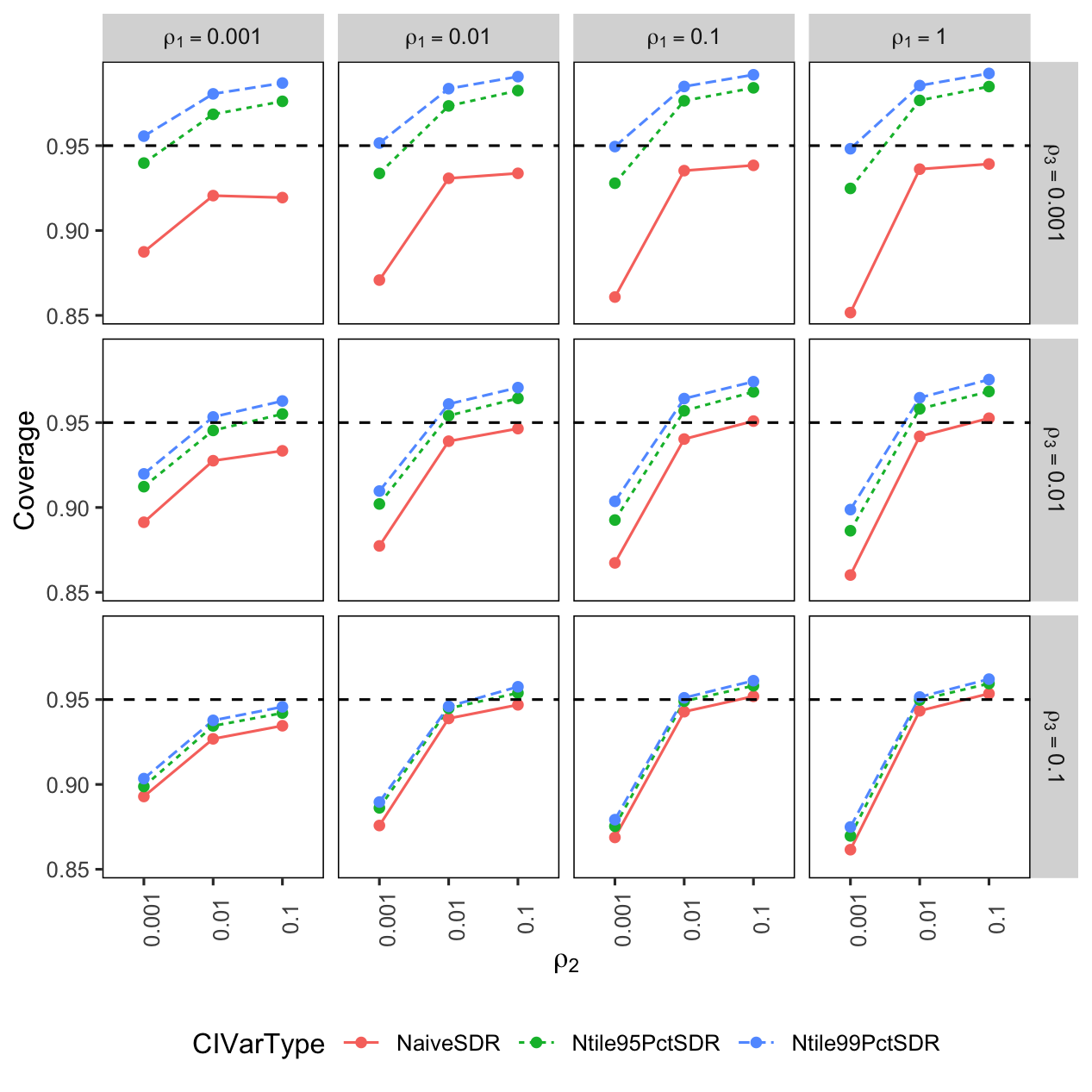}
    \caption{Empirical 95\% confidence interval coverage from simulated ground truth population estimates (y-axis) by values of $\rho_1$ (subplot columns), $\rho_2$ (x-axis), $\rho_3$ (subplot rows), and $\alpha_v$ (colors). Dashed line corresponds to 95\% coverage, the intended target.}
    \label{fig:coverage95}
\end{figure}

Figure \ref{fig:coverage95} estimates the average empirical coverage for 95\% confidence intervals as a function of $\rho_1, \rho_2, \rho_3$, and $\alpha_v$. To do this, we simulate a ``true" parameter $\theta$ from the non-DP normal approximation for each estimate sample, and we empirically average the proportion of intervals covering this simulated ground truth. The dashed line represents 95\% coverage, the intended target. As expected, as all the $\rho$ values increase, the empirical coverage tends towards the equivalent non-DP coverage. Similarly, decreasing $\alpha_v$ increases the empirical coverage probability; in particular, when $\rho_3$ decreases, more conservative values of $\alpha_v$ admit better coverage, as expected.

\section{Discussion}
\label{sec:discussion}
This paper theoretically and empirically suggests that survey weight regularization, when used appropriately, can  reduce the amount of additional noise needed to preserve DP. By adaptively considering how much to shrink weights toward uniform values while satisfying DP, we develop methods that are operationally feasible while facilitating  uncertainty quantification at different precision levels throughout the entire estimation process. 

While our proposed methods can admit the construction of valid finite-sample confidence intervals and asymptotic confidence intervals, different parameter choices may produce intervals that are either too 
wide or 
narrow in practice. When selecting privacy loss budgets for each stage of the algorithm, we recommend incorporating as much domain knowledge as possible. For example, by simulating a distribution of plausible AWD values from prior knowledge, one can establish which kinds of survey weighting biases could be correctable 
at different privacy loss budgets without peeking at the confidential data.

While DP theoretically forbids using data-dependent hyperparameters without DP mechanisms, many commonly used DP algorithms and analyses do not adhere to this rule \cite{abadi_deep_2016,tramer_adversarial_2019}, necessarily yielding additional privacy vulnerabilities in practice \cite{papernot_hyperparameter_2021,mohapatra_role_2022}. It could be the case that tuning certain hyperparameters could substantially improve the end-to-end usefulness of our estimators at a modest expense to privacy risk. 
Understanding this would require a much more extensive and nuanced privacy analysis than 
a simple comparison of privacy loss budgets. Still, such 
privacy analysis could help illuminate 
where DP itself fundamentally limits the kinds of statistical validity offered in survey settings, where worst-case data generating scenarios may be unrealistic in practice. 

\subsection*{Disclosure Statement}
The authors have no conflicts of interest to declare.

\subsection*{Acknowledgments}
Thanks to the National Bureau of Economic Research (NBER) and the Michigan Institute for Data and AI in Society (MIDAS) at the University of Michigan for supporting this research.

\bibliographystyle{alpha}
\bibliography{latex/references}

\appendix


\section{Proofs}
\label{apx:proofs}

\subsection{Proof of Lemma \ref{lem:tri}} 

Define the equivalent $\rho$-zCDP estimator using $\hat{\theta}_{\lambda}$, where
\begin{equation}
    \label{eq:gmech_reg}
    \hat{\theta}^{(\rho )}_{\lambda} \triangleq \hat{\theta}_{\lambda}(\bm{y}, \bm{w}) + \varepsilon, \qquad \varepsilon \sim N\left(0, \frac{\Delta(\hat{\theta})^2}{2\rho} \right).
\end{equation}
We then introduce a loss function that accounts for sources of error introduced by DP relative to the non-DP statistic $\hat{\theta}$ at a fixed $\lambda$ and privacy loss parameter $\rho$ used in applying the Gaussian mechanism, i.e.
\begin{align}
    \ell(\lambda; \bm{y}, \bm{w}) &\triangleq \Ex_{\varepsilon}\left[ (\hat{\theta}_{\lambda}^{(\rho)} -\hat{\theta})^2 \right] \\
    &= \Ex_{\varepsilon}\left[ (\hat{\theta}_{\lambda}^{(\rho)} - \hat{\theta}_{\lambda} + \hat{\theta}_{\lambda} -\hat{\theta})^2 \right] \\
    &= \Ex_{\varepsilon}\left[ (\hat{\theta}_{\lambda}^{(\rho)} - \hat{\theta}_{\lambda})^2 \right] + (\hat{\theta}_{\lambda} -\hat{\theta})^2 + 2 \Ex_{\varepsilon}\left[ \hat{\theta}_{\lambda}^{(\rho)} - \hat{\theta}_{\lambda} \right] \left( \hat{\theta}_{\lambda} -\hat{\theta}\right) \\
    &= \frac{\Delta(\hat{\theta}_{\lambda})^2}{2\rho} + D(\lambda)^2.
\end{align}
Unpacking this statement yields
\begin{align}
    \ell(\lambda; \bm{y}, \bm{w}) &= \label{eq:reg_loss} \frac{1}{2\rho}\left( \frac{\Delta_Y}{N} \right)^2 \left((1 - \lambda) U_W + \frac{\lambda N}{n} \right)^2 + \lambda^2 (\hat{\theta}_0 - \hat{\theta})^2 \\
    &= \frac{1}{2\rho} \left( \frac{\Delta_Y}{N} \right)^2\left(U_W + \lambda \left(\frac{N}{n} - U_W \right) \right)^2 + \lambda^2 (\hat{\theta}_0 - \hat{\theta})^2 \\
    &= \frac{1}{2\rho} \left( \frac{\Delta_Y}{N} \right)^2 \left[U_W^2 + \lambda^2 \left(\frac{N}{n} - U_W \right)^2 + 2 \lambda U_W \left(\frac{N}{n} - U_W \right) \right] + \lambda^2 (\hat{\theta}_0 - \hat{\theta})^2.
\end{align}
This implies
\begin{align}
    \frac{\partial \ell}{\partial \lambda} = \lambda \left[ \frac{1}{\rho} \left( \frac{\Delta_Y}{N} \right)^2 \left(\frac{N}{n} - U_W \right)^2 + 2 (\hat{\theta}_0 - \hat{\theta})^2 \right] + \frac{U_W}{\rho} \left( \frac{\Delta_Y}{N} \right)^2 \left(\frac{N}{n} - U_W \right).
\end{align}
Setting this to zero yields the critical value
\begin{align}
    \label{eq:lambda_crit}
    \lambda^* \triangleq \frac{\frac{U_W}{\rho} \left( \frac{\Delta_Y}{N} \right)^2 \left(U_W - \frac{N}{n} \right)}{\left[ \frac{1}{\rho} \left( \frac{\Delta_Y}{N} \right)^2 \left(U_W - \frac{N}{n} \right)^2 + 2 (\hat{\theta}_0 - \hat{\theta})^2 \right]}.
\end{align}

Since $U_W > \frac{N}{n}$ under informative (i.e., non-uniform sampling weights), which is true by construction under our data generating assumptions, we guarantee that $\lambda^* > 0$. Similarly, we see that
\begin{align}
    &\lambda^* \leq 1 \\
    &\iff \frac{1}{\rho} \left( \frac{\Delta_Y}{N} \right)^2 \left( U_W\left(U_W - \frac{N}{n} \right) \right) \leq \frac{1}{\rho} \left( \frac{\Delta_Y}{N} \right)^2 \left(U_W - \frac{N}{n} \right)^2 + 2 (\hat{\theta}_0 - \hat{\theta})^2 \\
    &\iff - 2 (\hat{\theta}_0 - \hat{\theta})^2 \leq \frac{1}{\rho} \left( \frac{\Delta_Y}{N} \right)^2 \left(U_W - \frac{N}{n} \right)^2 - \frac{1}{\rho} \left( \frac{\Delta_Y}{N} \right)^2 \left( U_W\left(U_W - \frac{N}{n} \right) \right)  \\
    &\iff - 2 \rho \left( \frac{N}{\Delta_Y} \right)^2 (\hat{\theta}_0 - \hat{\theta})^2 \leq \left(U_W - \frac{N}{n} \right)^2 - U_W\left(U_W - \frac{N}{n} \right) \\
    &\iff - 2 \rho \left( \frac{N}{\Delta_Y} \right)^2 (\hat{\theta}_0 - \hat{\theta})^2 \leq U_W^2 + \frac{N^2}{n^2} - \frac{2U_W N}{n} - U_W^2 + \frac{U_W N}{n} \\
    &\iff - 2 \rho \left( \frac{N}{\Delta_Y} \right)^2 (\hat{\theta}_0 - \hat{\theta})^2 \leq  \frac{N^2}{n^2} - \frac{U_W N}{n}  \\
    &\iff - 2 \rho \left( \frac{N}{\Delta_Y} \right)^2 (\hat{\theta}_0 - \hat{\theta})^2 \leq  \frac{N}{n} \left(\frac{N}{n} - U_W \right)  \\
    &\iff  (\hat{\theta}_0 - \hat{\theta})^2 \geq  \frac{\Delta_Y^2}{2\rho N^2} \left( \frac{N}{n} \right) \left(U_W - \frac{N}{n} \right)  \\
    &\iff \left| \hat{\theta}_0 - \hat{\theta} \right| \geq \sqrt{ \frac{\Delta_Y^2}{2\rho N n} \left(U_W - \frac{N}{n} \right) }.  
\end{align}

\subsection{Proof of Theorem 1}

\begin{theorem}
    Algorithm \ref{alg:priv_reg_est} satisfies $\left(\sum_{i=1}^3 \rho_i \right)$-zCDP. 
\end{theorem}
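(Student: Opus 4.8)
The plan is to recognize Algorithm~\ref{alg:priv_reg_est} as the adaptive composition of two of the base mechanisms already set up in the Background, and then invoke the adaptive composition property of zCDP \cite{bun_concentrated_2016}. (I note in passing that the displayed statement should read $(\rho_1+\rho_2)$-zCDP, matching the main-text Theorem~\ref{th1}, rather than $\sum_{i=1}^3\rho_i$: Algorithm~\ref{alg:priv_reg_est} consumes only $\rho_1$ and $\rho_2$, while the $\rho_3$ budget is spent later by Algorithm~\ref{alg:priv_reg_ci}.) Concretely, I would write the released pair as $M(\bm y,\bm w)=\big(M_1(\bm y,\bm w),\,M_2(\bm y,\bm w;\,M_1(\bm y,\bm w))\big)$, where $M_1$ samples $\hat\lambda^{(\rho_1)}$ from the density \eqref{eq:alg1} and $M_2$, given the realized value of $\hat\lambda^{(\rho_1)}$, returns the Gaussian draw \eqref{eq:alg2}. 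It then suffices to show (i) $M_1$ is $\rho_1$-zCDP; (ii) for every fixed output $\lambda\in[0,1]$ of the first stage, the map $M_2(\cdot;\lambda)$ is $\rho_2$-zCDP; and (iii) combine these by adaptive composition.

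For step (i), I would identify $M_1$ with the exponential mechanism applied to the loss $\ell(\lambda;\bm y,\bm w)=\Delta(\hat\theta_\lambda)^2/(2\rho_2)+D(\lambda)^2$ over $\mathcal Z=[0,1]$, with the privacy parameter chosen so that \eqref{eq:alg1} matches the $\rho_1$-zCDP density of the Exponential Mechanism definition. What must be verified is the loss sensitivity $\Delta(\ell)$. The first reduction is to observe that the variance summand $\Delta(\hat\theta_\lambda)^2/(2\rho_2)=\big(G_\lambda(U_W)U_Y/N\big)^2/(2\rho_2)$ depends only on the public constants $U_W,U_Y,N,n,\lambda,\rho_2$, so it cancels in any adjacent-data difference; the sensitivity is therefore carried entirely by $D(\lambda)^2=\lambda^2(\hat\theta_0-\hat\theta)^2$ via \eqref{eq:lambda_linear_bias}. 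Since $\sup_{\lambda\in[0,1]}\lambda^2=1$, this reduces $\Delta(\ell)$ to the adjacent-data sensitivity of $(\hat\theta_0-\hat\theta)^2$. For the underlying linear quantity, changing one record moves $\hat\theta_0$ by at most $\Delta(\hat\theta_0)=U_Y/n$ and $\hat\theta$ by at most $\Delta(\hat\theta)=U_WU_Y/N$; taking $y_i=0,\,y_i'=U_Y,\,w_i'=U_W$ shows both extremes are attained simultaneously and with a common sign, so $\Delta(\hat\theta_0-\hat\theta)=\Delta(\hat\theta)-\Delta(\hat\theta_0)$ (using $U_W\ge N/n$).

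For step (ii), I would note that once $\hat\lambda^{(\rho_1)}=\lambda$ has been released, the noise variance $\big(G_\lambda(U_W)U_Y/N\big)^2/(2\rho_2)$ in \eqref{eq:alg2} is a deterministic function of $\lambda$ and public constants; hence $M_2(\cdot;\lambda)$ is exactly the Gaussian mechanism applied to the statistic $\hat\theta_\lambda(\bm y,\bm w)$ with sensitivity $\Delta(\hat\theta_\lambda)=G_\lambda(U_W)U_Y/N$ and variance $\Delta(\hat\theta_\lambda)^2/(2\rho_2)$, which is $\rho_2$-zCDP for each fixed $\lambda$ by the Gaussian Mechanism definition. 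Step (iii) then closes the argument: the adaptive composition theorem for zCDP states that a $\rho_1$-zCDP mechanism followed by a (possibly output-dependent) $\rho_2$-zCDP mechanism is $(\rho_1+\rho_2)$-zCDP.

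The main obstacle is the sensitivity computation in step (i), because $\ell$ is quadratic in the confidential quantity $\hat\theta_0-\hat\theta$, and one cannot simply square the sensitivity of a linear statistic. After the reduction above, $\Delta(\ell)=\sup_{\text{adj}}\big|(\hat\theta_0-\hat\theta)^2-(\hat\theta_0'-\hat\theta')^2\big|=|a-b|\,|a+b|$ with $|a-b|\le\Delta(\hat\theta)-\Delta(\hat\theta_0)$, so obtaining the closed form $\Delta(\ell)=(\Delta(\hat\theta)-\Delta(\hat\theta_0))^2$ requires additionally controlling the factor $|a+b|$, i.e.\ the range of the weighting discrepancy (for instance through a range restriction or clamping of the bias term). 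This is the delicate point I would nail down carefully. A secondary subtlety worth stating explicitly is the adaptivity in step (ii): the second-stage noise scale is itself data-dependent through the released $\hat\lambda^{(\rho_1)}$, and it is precisely because $\hat\lambda^{(\rho_1)}$ is the output of the $\rho_1$-zCDP first stage (and the scale is measurable with respect to it) that the standard adaptive composition bound applies with no additional privacy cost.
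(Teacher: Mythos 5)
You take essentially the same route as the paper: write Algorithm \ref{alg:priv_reg_est} as the adaptive composition of the exponential mechanism \eqref{eq:alg1} for $\hat{\lambda}^{(\rho_1)}$ with the Gaussian mechanism \eqref{eq:alg2} whose noise scale is a function of the released $\hat{\lambda}^{(\rho_1)}$, then invoke zCDP composition; your treatment of stage (ii) and of the adaptivity issue matches the paper's. Your side remark about the statement is also correct: Algorithm \ref{alg:priv_reg_est} spends only $\rho_1 + \rho_2$ (as in the main-text Theorem \ref{th1}), while the appendix statement and its proof include a third mechanism $M_3$ releasing the noisy variance \eqref{eq:alg3}, so they really pertain to Algorithm \ref{alg:priv_reg_ci} and Theorem \ref{th:DP2}.

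The step you flag as the ``delicate point'' and decline to complete --- the closed form $\Delta(\ell) = (\Delta(\hat{\theta}) - \Delta(\hat{\theta}_0))^2$ --- is precisely where the paper's own proof is not rigorous, so you have identified a genuine gap in the paper rather than merely left one in your write-up. The paper bounds the per-record change of the linear difference, $\Delta(|\hat{\theta}_0 - \hat{\theta}|) \leq \Delta(\hat{\theta}) - \Delta(\hat{\theta}_0)$, and then asserts that this implies $\Delta((\hat{\theta}_0 - \hat{\theta})^2) = (\Delta(\hat{\theta}) - \Delta(\hat{\theta}_0))^2$ with no further argument: exactly the unjustified squaring you describe. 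As you note, $|a^2 - b^2| = |a-b|\,|a+b|$, and while $|a-b|$ is a per-record quantity of order $U_W U_Y/N$, the factor $|a+b|$ is roughly twice the AWD, a sample-scale quantity that can be as large as about $n U_W U_Y / N \geq U_Y$ (e.g., all $y_i = U_Y$ and all $w_i = U_W$). So the paper's $\Delta(\ell)$ can understate the true worst-case loss sensitivity by a factor of order $n$, and the first-stage $\rho_1$-zCDP claim does not follow as written --- in either your argument or the paper's --- without the additional range restriction or clamping of $D(\lambda)^2$ that you propose. One further caveat on your linear step: the equality $\Delta(\hat{\theta}_0 - \hat{\theta}) = \Delta(\hat{\theta}) - \Delta(\hat{\theta}_0)$ also fails, since taking $y_1 = y_1' = U_Y$, $w_1 = L_W$, $w_1' = U_W$ produces a change of $U_Y \Delta_W / N$, which exceeds $\Delta(\hat{\theta}) - \Delta(\hat{\theta}_0)$ whenever $L_W < N/n$; the safe linear bound is $U_Y \Delta_W / N$ (at most $\Delta(\hat{\theta})$ under the paper's convention $\Delta_W \leq U_W$), and the same slip appears in the paper's chain of inequalities.
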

\begin{proof}
    From our previous results, sampling from  \ref{eq:alg1} satisfies $\rho_1$-zCDP, and  sampling from Equations \ref{eq:alg2} and \ref{eq:alg3} satisfy $\rho_2$-zCDP and $\rho_3$-zCDP, respectively. Let these release mechanisms be defined as randomized algorithms $M_1(\bm{y}, \bm{w}), M_2(M_1(\bm{y}, \bm{w}), \bm{y}, \bm{w}),$ and $M_3(M_1(\bm{y}, \bm{w}), \bm{y}, \bm{w})$. By adaptive composition, releasing the joint vector
    \begin{align}
        \begin{pmatrix}
            \hat{\lambda}^{(\rho_1)} \\ \hat{\theta}_{\hat{\lambda}^{(\rho_1)}}^{(\rho_2)} \\ \widehat{\Var}_{\hat{\lambda}^{(\rho_1)}}(\hat{\theta})^{\rho_3}
        \end{pmatrix} = \begin{pmatrix}
            M_1(\bm{y}, \bm{w}) \\
            M_2(M_1(\bm{y}, \bm{w}), \bm{y}, \bm{w}) \\
            M_3(M_1(\bm{y}, \bm{w}), \bm{y}, \bm{w})
        \end{pmatrix}
    \end{align}
    satisfies $\left( \sum_{i=1}^3 \rho_i \right)$-zCDP. 

However, this statistic depends on the confidential data and is as sensitive as the original survey-weighted mean. We see that

\begin{align}
    \Delta(|\hat{\theta}_0 - \hat{\theta}|) &= \sup_{(\bm{y}, \bm{w}) \sim_M (\bm{y}', \bm{w}')} \left| |\hat{\theta}_0(\bm{y}, \bm{w}) - \hat{\theta}(\bm{y}, \bm{w})| - |\hat{\theta}_0(\bm{y}', \bm{w}') - \hat{\theta}(\bm{y}', \bm{w}')| \right| \\
    &\leq \sup_{(\bm{y}, \bm{w}) \sim_M (\bm{y}', \bm{w}')} \left| \hat{\theta}_0(\bm{y}, \bm{w}) - \hat{\theta}(\bm{y}, \bm{w}) - \hat{\theta}_0(\bm{y}', \bm{w}') + \hat{\theta}(\bm{y}', \bm{w}') \right| \\
    &= \sup_{(\bm{y}, \bm{w}) \sim_M (\bm{y}', \bm{w}')} \left| \left(\frac{ y_1 - y_1'}{n} \right) - \left( \frac{y_1' w_1' - y_1 w_1}{N} \right) \right| \\
    &\leq \left| \left(\frac{U_W U_Y - L_W L_Y}{N} \right) - \frac{\Delta_Y}{n} \right| \\
    &= \Delta(\hat{\theta}) - \Delta(\hat{\theta}_0) \\
\implies \Delta((\hat{\theta}_0 - \hat{\theta})^2) &= (\Delta(\hat{\theta}) - \Delta(\hat{\theta}_0))^2 \\
\implies \Delta(\ell) &= \sup_{\lambda \in [0, 1]} \sup_{(\bm{y}, \bm{w}) \sim_M (\bm{y}', \bm{w}')} \left| L(\bm{y}, \bm{w}, \lambda) - L(\bm{y}', \bm{w}', \lambda)\right| \\
&= \sup_{\lambda \in [0, 1]} \sup_{(\bm{y}, \bm{w}) \sim_M (\bm{y}', \bm{w}')} \Bigg| \lambda^2 \big( (\hat{\theta}_0(\bm{y}, \bm{w}) - \hat{\theta}(\bm{y}, \bm{w}))^2 \\
&\hspace{150pt} - (\hat{\theta}_0(\bm{y}', \bm{w}') - \hat{\theta}(\bm{y}', \bm{w}'))^2 \big)  \Bigg| \\
&= (\Delta(\hat{\theta}) - \Delta(\hat{\theta}_0))^2.
\end{align}

Using this expression, we can estimate the optimal $\lambda^*$ with a $\rho_1$-zCDP estimate $\hat{\lambda}^{(\rho_1)}$ by spending some privacy loss $\rho_1$ and implementing the exponential mechanism by sampling from the density
\begin{equation}
    \label{app:eq:lambda_expmech_implement}
    f(\lambda) \propto \ind{\lambda \in [0, 1]} \exp\left(-\frac{\sqrt{2\rho_1}}{2\Delta(\ell)} \ell(\lambda; \bm{y}, \bm{w}) \right).
\end{equation}
Note that  \eqref{app:eq:lambda_expmech_implement} specifies a truncated normal distribution centered at $\hat{\lambda^*}$ and bounded by $\lambda \in [0, 1]$.

To combine everything, we have
\begin{align}
    \p(|\hat{\theta}^{(\rho_2)}_{\hat{\lambda}^{(\rho_1)}} - \hat{\theta}| \leq \gamma) &\leq \p( |\hat{\theta}^{(\rho_2)}_{\hat{\lambda}^{(\rho_1)}} - \hat{\theta}_{\hat{\lambda}^{(\rho_1)}}| + |\hat{\theta}_{\hat{\lambda}^{(\rho_1)}} - \hat{\theta}| \leq \gamma) \\
    &\leq \p\left( |\hat{\theta}^{(\rho_2)}_{\hat{\lambda}^{(\rho_1)}} - \hat{\theta}_{\hat{\lambda}^{(\rho_1)}}| \leq \frac{\gamma}{2},  |\hat{\theta}_{\hat{\lambda}^{(\rho_1)}} - \hat{\theta}| \leq \frac{\gamma}{2} \right). 
\end{align}
where the two error sources are independent conditional on observing $\hat{\lambda}^{(\rho_1)}$. This yields the final result. 
\end{proof}

\end{document}